\newtheorem{problem}{Problem}
\newtheorem{assumption}{Assumption}
\newcommand{\paren}[1]{\ensuremath{\left( #1\right)}}
\newcommand{\clint}[1]{\ensuremath{\left[ #1\right]}}
\renewcommand{\set}[1]{\ensuremath{\left\{ #1\right\}}}
\newcommand{\matr}[1]{\ensuremath{\clint{\begin{array} #1 \end{array}}}}
\newcommand{\norm}[1]{\ensuremath{\left\| #1\right\|}}
\newcommand{\snorm}[1]{\ensuremath{\| #1\|}}
\newcommand{\mbf}[1]{\ensuremath{\bm{#1}}}
\newcommand{\K}{\ensuremath{\mathcal{K}}}
\newcommand{\R}{\ensuremath{\mathbb{R}}}
\newcommand{\RR}{\ensuremath{\mbf{\mathfrak{R}}}}
\renewcommand{\O}{\ensuremath{\mathcal{O}}}
\newcommand{\C}{\ensuremath{\mathcal{C}}}
\newcommand{\T}{\ensuremath{\mathcal{T}}}
\newcommand{\Hinf}{\ensuremath{\mathcal{H}_{\infty}}}
\newcommand{\Ht}{\ensuremath{\mathcal{H}_{2}}}
\DeclareMathOperator{\Tr}{\mathrm{tr}}
\title[Sample Complexity of Kalman Filtering for Unknown Systems]{Sample Complexity of Kalman Filtering for Unknown Systems}
\author{%
 \Name{Anastasios Tsiamis*} \Email{atsiamis@seas.upenn.edu}\\
 \Name{Nikolai Matni*} \Email{nmatni@seas.upenn.edu}\\
 \Name{George J. Pappas*} \Email{pappasg@seas.upenn.edu}\\
}
\begin{document}

\maketitle
\vspace{-0.5cm} 
\begin{abstract}%
In this paper, we consider the task of designing a Kalman Filter (KF) for an unknown and partially observed autonomous linear time invariant system driven by process and sensor noise.  To do so, we propose studying the following two step process: first, using system identification tools rooted in subspace methods, we obtain coarse finite-data estimates of the state-space parameters and Kalman gain describing the autonomous system; and second, we use these approximate parameters to design a filter which produces estimates of the system state.  We show that when the system identification step produces sufficiently accurate estimates, or when the underlying true KF is sufficiently robust, that a Certainty Equivalent (CE) KF, i.e., one designed using the estimated parameters directly, enjoys provable sub-optimality guarantees.  We further show that when these conditions fail, and in particular, when the CE KF is marginally stable (i.e., has eigenvalues very close to the unit circle), that imposing additional robustness constraints on the filter leads to similar sub-optimality guarantees.  We further show that with high probability, both the CE and robust filters have mean prediction error bounded by $\tilde O(1/\sqrt{N})$, where $N$ is the number of data points collected in the system identification step.  To the best of our knowledge, these are the first end-to-end sample complexity bounds for the Kalman Filtering of an unknown system.  
\end{abstract}

\begin{keywords}%
Kalman Filter, System Identification, Sample Complexity, Certainty Equivalence
\end{keywords}

\section{Introduction}\label{Section_Introduction}
Time series prediction is a fundamental problem across control theory~\citep{kailath2000linear}, economics~\citep{bauer2002estimating} and machine learning. In the case of autonomous linear time invariant (LTI) systems driven by Gaussian process and sensor noise:
	\begin{equation}\label{EQN_System_General}
z_{k+1}=Az_k+w_k, \
y_k=Cz_k+v_k
\end{equation}
the celebrated Kalman Filter (KF) has been the standard method for prediction~\citep{anderson2005optimal} . When model~\eqref{EQN_System_General} is known, the KF minimizes the mean square prediction error.
However, in many practical cases of interest (e.g., tracking moving objects, stock price forecasting), the state-space parameters are not known and must be learned from time-series data.  This system identification step, based on a finite amount of data, inevitably introduces parametric errors in model~\eqref{EQN_System_General}, which leads to a KF with suboptimal prediction performance~\citep{el2001robust}.

In this paper, we study this scenario, and provide finite-data estimation guarantees for the Kalman Filtering of an unknown autonomous LTI system~\eqref{EQN_System_General}.
We consider a simple two step procedure.
In the first step, using system identification tools rooted in subspace methods, we obtain finite-data estimates of the state-space parameters, and Kalman gain describing system~\eqref{EQN_System_General}. Then, in the second step, we use these approximate parameters to design a filter which predicts the system state.  We provide an end-to-end analysis of this two-step procedure, and characterize the sub-optimality of the resulting filter in terms of the number of samples used during the system identification step, where the sub-optimality is measured in terms of the mean square prediction error of the filter.  A key insight that emerges from our analysis is that using a Certainty Equivalent (CE) Kalman Filter, i.e., using a KF computed directly from estimated parameters, can yield poor estimation performance if the resulting CE KF has eigenvalues close to the unit circle.  To address this issue, we propose a Robust Kalman Filter that mitigates these effects and that still enjoys provable sub-optimality guarantees.

Our main contributions are that:
i) we show that if the system identification step produces sufficiently accurate estimates, or if the underlying true KF is sufficiently robust, then the CE KF has near optimal mean square prediction error, 
ii)  we show when the CE KF is marginally stable, i.e., when it has eigenvalues close to the unit circle, that a Robust KF synthesized by explicitly imposing bounds on the magnitude of certain closed loop maps of the system enjoys similar mean square prediction error bounds as the CE KF, while demonstrating improved stability properties, and 
iii) we integrate the above results with the finite-data system identification guarantees of \cite{tsiamis2019finite}, to provide, to the best of our knowledge, the first end-to-end sample complexity bounds for the Kalman Filtering of an unknown system. In particular, we show that the mean square estimation error of both the Certainty Equivalent and Robust Kalman filter produced by the two step procedure described above is, with high probability, bounded by $\tilde O(1/\sqrt{N})$, where $N$ is the number of samples collected in the system identification step.

\textbf{Related work.}
A similar two step process was studied for the Linear Quadratic (LQ) control of an unknown system in~\cite{dean2017sample,mania2019certainty}.  While LQ optimal control and Kalman Filtering are known to be dual problems, this duality breaks down when the state-space parameters describing the system dynamics are not known.  In particular, the LQ optimal control problem assumes full state information, making the system identification step much simpler -- in particular, it reduces to a simple least-squares problem.  In contrast, in the KF setting, as only partial observations are available, the additional challenge of finding an appropriate system order and state-space realization must be addressed.
On the other hand, in the KF problem one can directly estimate the KF gain from data, which makes analyzing performance of the CE KF simpler than the performance of the CE LQ optimal controller~\citep{mania2019certainty}.

System identification of autonomous LTI systems~\eqref{EQN_System_General} is referred to as stochastic system identification~\citep{van2012subspace}.  Classical results consider the asymptotic consistency of stochastic subspace system identification, as in~\cite{deistler1995consistency,bauer1999consistency}, whereas contemporary results seek to provide finite data guarantees~\citep{tsiamis2019finite,lee2019non}.
Finite data guarantees for system identification of partially observed systems can also be found in~\cite{oymak2018non,simchowitz2019semi,sarkar2019finite}, but these results focus on learning the non-stochastic part of the system, assuming that a user specified input is used to persistently excite the dynamics.

Classical approaches to robust Kalman Filtering can be found in~~\cite{el2001robust,sayed2001framework,levy2012robust}, where parametric uncertainty is explicitly taken into account during the filter synthesis procedure.  Although similar in spirit to our robust KF procedure, these approaches assume fixed parametric uncertainty, and do not characterize the effects of parametric uncertainty on estimation performance, with this latter step being key in providing end-to-end sample complexity bounds.  We also note that although not directly comparable to our work, the filtering problem for an unknown LTI system was also recently studied in the adversarial noise setting in~\cite{hazan2018spectral}, where a spectral filtering technique is used to directly predict the output bypassing the system identification step. In the stochastic noise case, online-learning of the Kalman Filter was studied in~\cite{kozdoba2019line}, where the goal is to predict a scalar output. This is different from our paper, where the goal is to learn a state-space representation of the KF; our analysis holds for multi-output systems as well.

\textbf{Paper structure.} In Sec.~\ref{Section_Formulation}, we formulate the problem, and in Sec.~\ref{Section_CE} and~\ref{Section_Dynamic}, we derive performance guarantees for the proposed CE and Robust Kalman filters.  In Sec.~\ref{Section_Combined_Result}, we provide end-to-end sample complexity bounds for our two step procedure, and demonstrate the effectiveness of our pipeline with a numerical example in Sec.~\ref{Section_Simulations}.  We end with a discussion of future work in Sec.~\ref{Section_Conclusion}.   All proofs, missing details, and a summary of the system identification results from~\cite{tsiamis2019finite} can be found in the Appendix.

\textbf{Notation.} We let bold symbols denote the frequency representation of signals. For example, $\mbf{\Phi}=\sum_{t=0}^{\infty}\Phi_tz^{-t}$. If $M$ is stable  with spectral radius $\rho(M)<1$, then we denote its resolvent by $\RR_{M}\triangleq (zI-M)^{-1}$. The $\Ht$ system norm is defined by $\norm{\mbf \Phi}_{\Ht}^2\triangleq \sum_{t=0}^{\infty} \norm{\Phi_t}_F^2$, where $\norm{\cdot}_F$ is the Frobenius norm. The $\Hinf$ system norm is defined by $\norm{\mbf \Phi}_{\Hinf}\triangleq \sup_{\norm{z}=1} \norm{\mbf \Phi(z)}_2$, where $\norm{\cdot}_2$ is the spectral norm. Let $\frac{1}{z}\mathcal{RH}_{\infty}$ be the set of real rational stable strictly proper transfer matrices.

\section{Problem Formulation}\label{Section_Formulation}
For the remainder of the paper, we consider the Kalman Filter form of system~\eqref{EQN_System_General}:
\begin{equation}\label{EQN_System}
x_{k+1}=Ax_{k}+Ke_{k}, \ 
y_{k}=Cx_k+e_k,
\end{equation}
where $x_k\in \R^n$ is the prediction (state), $y_k\in\R^m$ is the output, and $e_k\in\R^{n}$ is the innovation process.
The innovations $e_k$ are assumed to be i.i.d. zero mean Gaussians, with positive definite covariance matrix $R$, and the initial state is assumed to be $x_0=0$. 
In general, the system~\eqref{EQN_System_General} driven by i.i.d. zero mean Gaussian process and sensor noise
is equivalent to system~\eqref{EQN_System} for a suitable gain matrix $K$, as both noise models produce outputs with identical statistical properties~\cite[Chapter 3]{van2012subspace}. 
We make the following assumption throughout the rest of the paper.
\begin{assumption}\label{ASS_Kalman}
	Matrices $A,C,K,R$ are unknown, and the pair $(A,C)$ is observable.  Both the matrices $A$ and $A-KC$ have spectral radius less than $1$, i.e.,  $\rho(A)<1$ and $\rho(A-KC)<1$. 
\end{assumption}
The observability assumption is standard, and the stability of $A-KC$  follows from the properties of the Kalman filter~\citep{anderson2005optimal}.  We note that the filter synthesis procedures we propose can be applied even if $\rho(A)\geq 1$ -- however, in this case, we are unable to guarantee bounded estimation error for the resulting CE and robust KFs (see Theorem~\ref{THM_CE_performance} and Lemma~\ref{LEM_error_analysis}).  

Our goal is to provide end-to-end sample complexity bounds for the two step pipeline illustrated in Fig.~\ref{Figure_CoarseID_architecture}. First, we collect a trajectory $\{y_t\}_{t=0}^N$ of length $N$ from system \eqref{EQN_System}, and use system identification tools with finite data guarantees  to learn the parameters $\hat{A},\hat{C},\hat{K},\hat{R}$ and bound the corresponding parameter uncertainties by  $(\epsilon_A,\epsilon_C,\epsilon_K,\epsilon_R)$. Second, we use these approximate parameters to synthesize a filter from the following class:
	\begin{equation}\label{EQN_Def_Predictor_class}
	\tilde{x}_{k}=\hat{A}\tilde{x}_{k-1}+\sum_{t=1}^{k}L_t (y_{k-t}-\hat{C}\tilde{x}_{k-t}),\quad 	\tilde{J}\triangleq \sqrt{\lim_{T\rightarrow \infty}\frac{1}{T}\sum_{k=0}^{T}\snorm{\tilde{x}_k-x_k}_2^2}
	\end{equation}
	where $\set{L_{t}}_{t=1}^{\infty}$ are to be designed and $\tilde{J}$ is the filter's mean square prediction error as defined with respect to the optimal KF.
Note that the predictor class above includes the CE KF -- see Section~\ref{Section_CE} -- and that if the the true system parameters are known, i.e., if $\hat{A}=A$, $\hat{C}=C$, $\hat{K}=K$, then the optimal mean squared prediction error $\tilde{J}=0$ is achieved.

\begin{figure}[t]
	\centering
	\includegraphics[scale=1]{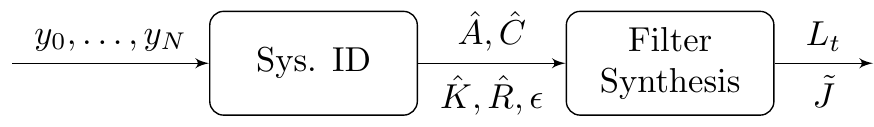}
	\caption{The proposed identification and filter synthesis pipeline.  Using a single trajectory of $N$ samples $\{y_t\}_{t=0}^N$ generated by system~\eqref{EQN_System}, a system identification algorithm computes estimates of $(\hat{A},\hat{C},\hat{K},\hat{R})$ with corresponding identification error bounds $\epsilon:=\max(\epsilon_A,\epsilon_C,\epsilon_K,\epsilon_R)$. Then, using these estimates, we synthesize a filter defined by dynamic gains $\set{L_{t}}_{t=1}^{\infty}$, which has mean square prediction error $\tilde{J}$, defined in~\eqref{EQN_Def_Predictor_class}.}\label{Figure_CoarseID_architecture}
	\vspace{-0.5cm}
\end{figure}
\begin{problem}[End-to-end Sample Complexity]\label{Prob_Sample_Complexity}
Fix a failure probability $\delta > 0$.  Given a single trajectory $y_0,\dots,y_N$ of system~\eqref{EQN_System}, compute system parameter estimates
$\hat{A},\hat{C},\hat{K},\hat{R}$, and design a Kalman filter in class~\eqref{EQN_Def_Predictor_class}, defined by gains $\set{L_{t}}_{t=1}^{\infty}$, such that with probability at least $1-\delta$, we have that $
\tilde{J}\le \epsilon_{J}$, so long as $N \ge \mathrm{poly}(1/\epsilon_J,\log(1/\delta))$.
\end{problem}

To address Problem \ref{Prob_Sample_Complexity}, we will: i) leverage recent results regarding the the sample complexity of stochastic system identification,  ii) provide estimation guarantees for certainty equivalent as well as robust Kalman filter designed using the identified system parameters (see Problem~\ref{Prob_Synthesis} below), and (iii) provide end-to-end performance guarantees by integrating steps (i) and (ii) (see  Problem~\ref{Prob_Sample_Complexity} above).

Recently ~\cite{tsiamis2019finite} provided a finite sample analysis for stochastic system identification which provides bounds on the identification error $\epsilon:=\max(\epsilon_A,\epsilon_C,\epsilon_K,\epsilon_R)$.  Leveraging these results, we focus next on solving the Filter Synthesis task described below using both a certainty equivalent Kalman filter as well as a robust Kalman filter.
\begin{problem}[Near Optimal Kalman Filtering of an Uncertain System]\label{Prob_Synthesis}
	Consider system~\eqref{EQN_System}. Let $\hat{A},\hat{C},\hat{K},\hat{R}$ be estimates satisfying\footnote{
	 In practice, estimating the parameters of a partially observed system~\eqref{EQN_System} is ill-posed, in that any similarity transformation $S$ can be applied to generate parameters $(S^{-1}AS, CS,S^{-1}K, R)$ describing the same system, and the bounds described hold for \emph{some} similarity transformation $S$.  All results in this paper apply nearly as is to the general case of $S\neq I$ under suitable assumptions -- more details can be found in the extended version.}
$
	\snorm{A-\hat{A}}_2\le \epsilon_A,\,\snorm{C-\hat{C}}_2\le \epsilon_C,\,\snorm{K-\hat{K}}_2\le \epsilon_K,\,\snorm{R-\hat{R}}_2\le \epsilon_R.
$
	Design a Kalman filter in class~\eqref{EQN_Def_Predictor_class}, defined by gains $\{L_t\}_{t=0}^\infty$, with mean square prediction error decaying with the size of the parameter uncertainty, i.e., such that
$
\tilde{J}\le O(\epsilon_A,\epsilon_C,\epsilon_K,\epsilon_R)$.
\end{problem}

%!TEX root = main.tex
\section{Estimation Guarantees for Certainty Equivalent Kalman Filtering}\label{Section_CE}
For the certainty equivalent Kalman filter, we directly use the estimated state-space parameters from the system identification step.  
Based on the estimated $\hat{K},\hat{R}$ we compute the covariance:
\[
\matr{{cc}\hat{Q}&\hat{S}\\\hat{S}^*&\hat{R}}\triangleq \mathbb{E}\matr{{c}\hat{K}e_k\\e_k}\matr{{cc}e^*_k\hat{K}^*&e^*_k}=\matr{{c}\hat{K}\hat{R}^{1/2}\\ \hat{R}^{1/2}}\matr{{cc}\hat{R}^{1/2}\hat{K}^*& \hat{R}^{1/2}}.
\]
Then, based on standard Kalman filter theory, we compute the stabilizing solution\footnote{A stabilizing solution $P$ to the Riccati equation defines a Kalman gain $L_{CE}$ such that $\rho(\hat{A}-L_{CE}\hat{C})<1$.} of the following Riccati equation with correlation terms~\citep{kailath2000linear}:
\begin{equation}\label{EQN_DEF_CE_Riccati}
P=\hat{A}P\hat{A}^*+\hat{Q}-(\hat{A}P\hat{C}^*+\hat{S})(\hat{C}P\hat{C}^*+\hat{R})^{-1}(\hat{C}P\hat{A}^*+\hat{S}^*).
\end{equation}
Then, the  CE Kalman filter gain is static and takes the form
\begin{equation}\label{EQN_DEF_CE_predictor}
L_1=L_{CE}\triangleq (\hat{A}P\hat{C}^*+\hat{S})(\hat{C}P\hat{C}^*+\hat{R})^{-1},\,L_t=0, \text{ for }t=2,\dots.
\end{equation}
Trivially, if $\rho(\hat{A}-\hat{K}\hat{C})<1$, then the stabilizing solution of the Riccati equation is $P=0$ with $L_{CE}=\hat{K}$; the solution does not depend on $\hat{R}$. The next result shows that if the underlying true Kalman filter is sufficiently robust, as measured by a spectral decay rate, and that estimation parameter errors are sufficiently small, then the CE Kalman filter achieves near optimal performance. 
\begin{theorem}[Near Optimal Certainty Equivalent Kalman Filtering]\label{THM_CE_performance}
	Consider Problem~\ref{Prob_Synthesis} and the CE KF~\eqref{EQN_DEF_CE_predictor}. For any $\rho(A-KC)\leq\rho<1$, define 
	$
	\tau(A-KC,\rho)\triangleq \sup_{t\ge 0}{\norm{\paren{A-KC}^{t}}_2\rho^{-t}}.
	$ 
	If the robustness condition 
	$2	\tau(A-KC,\rho)\cdot\left(\epsilon_A+\epsilon_C(\norm{K}_2+\epsilon_K)+
	\epsilon_K\norm{C}_2\right)\le 1-\rho$ %\frac{1-\rho}{2	\tau(A-KC,\rho)} \]
	is satisfied,   then $L_{CE}=\hat{K}$ and:
	\[
	\tilde{J}\le \sqrt{3}\bar{\C}\epsilon\norm{\matr{{c} \RR_AK\\ I}R^{1/2}}_{\Ht}
	\]	
	where $\epsilon=\max\set{\epsilon_A,\epsilon_C,\epsilon_K}$, $
	\bar{\C}=2\frac{\tau(A-KC,\rho)}{1-\rho}\paren{1+\norm{K}_2+\epsilon_K}$ and $\RR_{A}=(zI-A)^{-1}$. 
\end{theorem}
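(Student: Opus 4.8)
The plan is to proceed in three stages: first establish that the robustness condition forces $L_{CE}=\hat{K}$, then derive the closed-loop transfer function from the innovations to the prediction error, and finally bound its $\Ht$ norm.

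\textbf{Step 1 (reducing to $L_{CE}=\hat{K}$).} I would first show that the robustness condition guarantees $\rho(\hat{A}-\hat{K}\hat{C})<1$, so that $P=0$ solves the Riccati equation~\eqref{EQN_DEF_CE_Riccati} and hence $L_{CE}=\hat{K}$. Writing $\hat{A}-\hat{K}\hat{C}=(A-KC)+\Delta$ and expanding $\hat{K}\hat{C}-KC=\hat{K}(\hat{C}-C)+(\hat{K}-K)C$, the perturbation is $\Delta=(\hat{A}-A)-\hat{K}(\hat{C}-C)-(\hat{K}-K)C$, whose spectral norm is at most $\epsilon_A+\epsilon_C(\norm{K}_2+\epsilon_K)+\epsilon_K\norm{C}_2$, exactly the quantity appearing in the robustness hypothesis. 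Using the defining property $\norm{(A-KC)^t}_2\le\tau(A-KC,\rho)\rho^t$ together with the elementary estimate $\norm{(M+\Delta)^t}_2\le\tau(\rho+\tau\norm{\Delta}_2)^t$ — obtained by expanding $(M+\Delta)^t$ and grouping the $\binom{t}{j}$ monomials containing exactly $j$ factors of $\Delta$ — the hypothesis $2\tau\norm{\Delta}_2\le 1-\rho$ yields a decay rate $\rho+\tau\norm{\Delta}_2\le(1+\rho)/2<1$. This certifies stability of $\hat{A}-\hat{K}\hat{C}$ and pins down the CE gain.

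\textbf{Step 2 (error dynamics and the closed-loop map).} With $L_{CE}=\hat{K}$, the filter reads $\tilde{x}_k=\hat{A}\tilde{x}_{k-1}+\hat{K}(y_{k-1}-\hat{C}\tilde{x}_{k-1})$. Setting $\delta_k=\tilde{x}_k-x_k$ and substituting $y_{k-1}=Cx_{k-1}+e_{k-1}$ and $x_k=Ax_{k-1}+Ke_{k-1}$, a direct computation gives
\[
\delta_k=(\hat{A}-\hat{K}\hat{C})\delta_{k-1}+\big((\hat{A}-A)-\hat{K}(\hat{C}-C)\big)x_{k-1}+(\hat{K}-K)e_{k-1}.
\]
Since $x$ itself is the output of $\RR_A K$ driven by $e$, i.e. $\mbf{x}=\RR_A K\mbf{e}$, the map from the innovations to the error is
\[
\mbf{G}=\RR_{\hat{A}-\hat{K}\hat{C}}\Big(\big((\hat{A}-A)-\hat{K}(\hat{C}-C)\big)\RR_A K+(\hat{K}-K)\Big).
\]
Because the innovations are i.i.d. with covariance $R$ and the closed loop is stable, the Ces\`aro average defining $\tilde{J}$ converges to the steady-state error variance, so that $\tilde{J}=\norm{\mbf{G}R^{1/2}}_{\Ht}$.

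\textbf{Step 3 (norm bounds and assembly).} I would bound the resolvent using Step 1: $\norm{\RR_{\hat{A}-\hat{K}\hat{C}}}_{\Hinf}\le\sum_{t\ge 0}\norm{(\hat{A}-\hat{K}\hat{C})^t}_2\le\tau(A-KC,\rho)\sum_{t\ge0}\big((1+\rho)/2\big)^t=\tfrac{2\tau(A-KC,\rho)}{1-\rho}$. Splitting $\mbf{G}R^{1/2}$ into the three terms multiplying $\hat{A}-A$, $\hat{K}(\hat{C}-C)$, and $\hat{K}-K$, and applying $\norm{\mbf{\Phi}_1\mbf{\Phi}_2}_{\Ht}\le\norm{\mbf{\Phi}_1}_{\Hinf}\norm{\mbf{\Phi}_2}_{\Ht}$, the first two are bounded by $\norm{\RR_{\hat{A}-\hat{K}\hat{C}}}_{\Hinf}\,\epsilon$ and $\norm{\RR_{\hat{A}-\hat{K}\hat{C}}}_{\Hinf}(\norm{K}_2+\epsilon_K)\epsilon$ times $\norm{\RR_A K R^{1/2}}_{\Ht}$, and the third by $\norm{\RR_{\hat{A}-\hat{K}\hat{C}}}_{\Hinf}\,\epsilon\,\norm{R^{1/2}}_{\Ht}$. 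Combining via $(a+b+c)^2\le 3(a^2+b^2+c^2)$ and using $1+(\norm{K}_2+\epsilon_K)^2\le(1+\norm{K}_2+\epsilon_K)^2$ to factor out $\bar{\C}$ yields
\[
\tilde{J}\le\sqrt{3}\,\bar{\C}\,\epsilon\,\sqrt{\norm{\RR_A K R^{1/2}}_{\Ht}^2+\norm{R^{1/2}}_{\Ht}^2}=\sqrt{3}\,\bar{\C}\,\epsilon\,\norm{\matr{{c} \RR_A K\\ I}R^{1/2}}_{\Ht},
\]
where the final identity uses that the $\Ht$ norm of a stacked system is the root-sum-of-squares of its blocks.

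\textbf{Main obstacle.} I expect the crux to be Step 1 together with the resolvent estimate: establishing stability of $\hat{A}-\hat{K}\hat{C}$ (so that $L_{CE}=\hat{K}$) and converting the power-decay bound into an $\Hinf$ bound on $\RR_{\hat{A}-\hat{K}\hat{C}}$ is where the robustness condition is consumed and where the constant $\bar{\C}$ originates. The transfer-function algebra and norm bookkeeping in Steps 2--3 are then routine once these estimates are in hand.
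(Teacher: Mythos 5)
Your proposal is correct and follows essentially the same route as the paper: the robustness condition is consumed exactly as in the paper's proof (via the matrix-power perturbation bound of Mania et al., yielding decay rate $(1+\rho)/2$, stability of $\hat{A}-\hat{K}\hat{C}$, hence $L_{CE}=\hat{K}$ and the resolvent bound $\norm{\RR_{\hat{A}-\hat{K}\hat{C}}}_{\Hinf}\le 2\tau/(1-\rho)$), and your error map in Step 2 is precisely the paper's Lemma~\ref{LEM_error_analysis} specialized to the static gain $\mbf L=\hat{K}$, where the suboptimality term $\mbf\Phi_w\hat{K}+\mbf\Phi_v$ vanishes. The only cosmetic difference is bookkeeping in Step 3: you split into three terms and recombine via $(a+b+c)^2\le 3(a^2+b^2+c^2)$, while the paper bounds the spectral norm of the block matrix $\bigl[\begin{smallmatrix}\Delta_A&\Delta_K\\ \Delta_C&0\end{smallmatrix}\bigr]$ by $\sqrt{3}\epsilon$ directly — both yield the same constant $\bar{\C}$.
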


The transient behavior of the CE Kalman filter is governed by the closed loop eigenvalues of $\hat{A}-\hat{K}\hat{C}$, with performance degrading as eigenvalues approach the unit circle.  This may occur if the estimation errors $(\epsilon_A,\epsilon_C,\epsilon_K)$ are large enough to cause  $\rho(\hat{A}-\hat{K}\hat{C})\approx 1$ even if the true system has spectral radius $\rho(A-KC)<1$. We show in the next section that this undesirable scenario can be avoided by explicitly constraining the transient response of the resulting Kalman filter to satisfy certain robustness constraints.

\section{Estimation Guarantees for Robust Kalman Filtering}\label{Section_Dynamic}
To address the possible poor performance of the CE Kalman filter when model uncertainty is large, we propose to search over  dynamic filters~\eqref{EQN_Def_Predictor_class} subject to additional robustness constraints on their transient response. 
Using the System Level Synthesis (SLS) framework~\citep{wang2019system,anderson2019system} for Kalman Filtering~\citep{wang2015localized}, we
parameterize the class of dynamic filters~\eqref{EQN_Def_Predictor_class} subject to additional robustness constraints in a way that leads to convex optimization problems. 

 For a given dynamic predictor $\mbf L(z) = \sum_{t=0}^\infty z^{-t}L_{t+1}$, we define the closed loop \emph{system responses}:
\begin{equation}\label{EQN_Def_responses}
\mbf \Phi_w(z)\triangleq (zI-\hat{A}+\mbf L \hat{C})^{-1},\, \mbf \Phi_v(z)\triangleq -(zI-\hat{A}+\mbf L \hat{C})^{-1}\mbf L.
\end{equation}
In \citep{wang2015localized}, it is shown that these responses are in fact the closed loop maps from process and sensor noise $(\mbf w,\mbf v)$ to state estimation error, and that the filter gain achieving the desired behavior can be recovered via $\mbf L=-\mbf \Phi^{-1}_w\mbf \Phi_v$ so long as the responses $(\mbf \Phi_w,\mbf \Phi_v)$ are constrained to lie in an affine space defined by the system dynamics.  By expressing the mean squared prediction error of the filters~\eqref{EQN_Def_Predictor_class} in terms of their system responses, we are able to clearly delineate the effects of parametric uncertainty from the cost of deviating from the CE Kalman filter.
\begin{lemma}[Error analysis]\label{LEM_error_analysis}
	Consider system~\eqref{EQN_System}. Let $\Delta_A\triangleq A-\hat{A}$, $\Delta_C\triangleq C-\hat{C}$, $\Delta_K\triangleq K-\hat{K}$. Any filter~\eqref{EQN_Def_Predictor_class} with parameterization~\eqref{EQN_Def_responses} has mean squared prediction error given by
	\[
	\tilde{J}=\norm{\matr{{cc}\mbf\Phi_w & \mbf\Phi_v}\set{\matr{{cc}\Delta_A&\Delta_K\\  \Delta_C&0}\matr{{c}\mbf \RR_AK\\ I}+\matr{{c}\hat{K}\\ I}}R^{1/2}}_{\Ht}
	\]
\end{lemma}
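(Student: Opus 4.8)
The plan is to work entirely in the frequency domain: express both the true state $\mbf x$ and the filter output $\tilde{\mbf x}$ as transfer functions acting on the innovation $\mbf e$, reduce the resulting error map to the claimed bracketed form using the affine constraint satisfied by the system responses, and finally identify the steady-state mean square error with the $\Ht$ norm. First I would take $z$-transforms of~\eqref{EQN_System}, giving $\mbf x = \RR_A K\mbf e$ and $\mbf y = (C\RR_A K + I)\mbf e$. Writing the predictor~\eqref{EQN_Def_Predictor_class} with $\mbf L(z)=\sum_{t=0}^\infty z^{-t}L_{t+1}$ and transforming the recursion yields $(zI-\hat A+\mbf L\hat C)\tilde{\mbf x}=\mbf L\mbf y$, so by definition~\eqref{EQN_Def_responses} we have $\tilde{\mbf x}=-\mbf\Phi_v\mbf y$. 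Consequently the error map from $\mbf e$ is $\tilde{\mbf x}-\mbf x=-\left[\mbf\Phi_v(C\RR_A K+I)+\RR_A K\right]\mbf e$.

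Next I would record the System Level Synthesis constraint. From $\mbf\Phi_w=(zI-\hat A+\mbf L\hat C)^{-1}$ and $\mbf\Phi_v=-\mbf\Phi_w\mbf L$, left-multiplying $\mbf\Phi_w^{-1}=zI-\hat A+\mbf L\hat C$ by $\mbf\Phi_w$ gives the identity $\mbf\Phi_w(zI-\hat A)-\mbf\Phi_v\hat C=I$. The core of the proof is then to show that the error map above coincides, up to an irrelevant sign, with the target. I would start from the target, perform the block multiplication, and use $\Delta_K+\hat K=K$ together with $\Delta_C=C-\hat C$ to reduce the inner vector to the stack of $\Delta_A\RR_A K+K$ and $(C-\hat C)\RR_A K+I$. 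The key algebraic step is the resolvent identity $\Delta_A\RR_A=(zI-\hat A)\RR_A-I$, obtained by writing $\Delta_A=(zI-\hat A)-(zI-A)$ and using $(zI-A)\RR_A=I$; substituting this causes the two $\mbf\Phi_w K$ terms to cancel, and applying the constraint collapses $\left[\mbf\Phi_w(zI-\hat A)-\mbf\Phi_v\hat C\right]\RR_A K$ to $\RR_A K$. What remains is exactly $\RR_A K+\mbf\Phi_v(C\RR_A K+I)$, matching the error map.

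I expect this cancellation to be the main obstacle: it requires decomposing $\Delta_A$ against the correct resolvent $\RR_A$ and invoking the affine constraint at precisely the right grouping of terms, with careful sign bookkeeping throughout the block products.

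Finally I would connect the $\Ht$ norm to $\tilde J$. Writing $e_k=R^{1/2}\xi_k$ with $\xi_k$ white of unit covariance, the error $\tilde x_k-x_k$ is a stable, strictly proper LTI filtering of white noise; stability of $\RR_A$ follows from $\rho(A)<1$ and that of $\mbf\Phi_w,\mbf\Phi_v$ from the filter being stabilizing, while strict properness of the resulting $\mbf G$ places it in $\tfrac{1}{z}\mathcal{RH}_{\infty}$. The error process is then asymptotically stationary and ergodic, so the Cesàro time average defining $\tilde J$ in~\eqref{EQN_Def_Predictor_class} converges almost surely to the steady-state second moment $\Tr\!\big(\sum_t G_t R G_t^*\big)=\norm{\mbf G R^{1/2}}_{\Ht}^2$. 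Taking square roots and substituting the simplified expression for $\mbf G$ gives the claimed identity for $\tilde J$.
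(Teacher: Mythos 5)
Your proof is correct, but it organizes the algebra differently from the paper. The paper works \emph{forward} from the error dynamics: it writes both the system and the predictor as implicit frequency-domain equations, subtracts them to get $(zI-\hat{A}+\mbf L\hat{C})(\mbf x-\mbf{\tilde x})=(\Delta_A-\mbf{L}\Delta_C)\mbf x+(K-\mbf{L})\mbf e$, and then multiplies by $\mbf\Phi_w$; the $\Delta$-decomposition emerges constructively from regrouping $(zI-A)=(zI-\hat{A})-\Delta_A$, and the affine constraint is never invoked explicitly (it enters only through $\mbf\Phi_w\mbf\Phi_w^{-1}=I$). You instead work \emph{backward}: you eliminate $\mbf{\tilde x}$ outright via $\mbf{\tilde x}=-\mbf\Phi_v\mbf y$, obtain the $\Delta$-free error map $\RR_AK+\mbf\Phi_v(C\RR_AK+I)$, and then verify that the claimed $\Delta$-form collapses to it using the resolvent identity $\Delta_A\RR_A=(zI-\hat A)\RR_A-I$ together with the SLS constraint $\mbf\Phi_w(zI-\hat A)-\mbf\Phi_v\hat C=I$; your cancellation of the two $\mbf\Phi_w K$ terms checks out exactly. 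The paper's route shows how one would \emph{discover} the decomposition, which is pedagogically natural for a synthesis paper; your route makes explicit the structural role of the affine SLS constraint and isolates a clean, reusable identity (the error map depends on the plant only through $\RR_AK$ and on the filter only through $\mbf\Phi_v$). Your treatment of the final step (ergodicity of the stationary error process so that the Ces\`aro time average converges to the steady-state second moment, hence the $\Ht$ norm) is actually more careful than the paper's one-line appeal to the definition of the $\Ht$ norm and whiteness of $R^{-1/2}\mbf e$.
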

Based on the previous lemma, we can upper bound the mean squared prediction error of filters~\eqref{EQN_Def_Predictor_class} by
\begin{align*}
\tilde{J}\le& \underbrace{\sqrt{3}\epsilon\norm{\matr{{cc}\mbf\Phi_w & \mbf\Phi_v}}_{\Ht}\norm{\matr{{c}\mbf \RR_AK\\ I}R^{1/2}}_{\Hinf}}_{\text{parameter uncertainty term}}+\underbrace{\norm{\mbf \Phi_w \hat{K}+\mbf \Phi_v}_{\Ht}\norm{R^{1/2}}_2}_{\text{suboptimality term}},
\end{align*}
where $\epsilon=\max\set{\epsilon_A,\epsilon_C,\epsilon_K}$. This upper bound clearly separates the effects of parameter uncertainty, as captured by the first term, and the performance cost incurred by the filter $\mbf L$ due to its deviation from the CE Kalman gain $\hat{K}$, as captured by the second.   In order to optimally tradeoff between these two terms, we propose the following robust SLS optimization problem:
\begin{equation}\label{EQN_Robust_Optimization}
\begin{aligned}
&\min_{\mbf\Phi_w,\mbf\Phi_v}\norm{\mbf\Phi_w\hat{K}+\mbf\Phi_v}_{\mathcal{H}_2}
\\ &\text{s.t.} \, \norm{\matr{{cc}\mbf\Phi_w\ \mbf\Phi_v}}_{\Ht}\le \C\\
& \mbf\Phi_w(zI-\hat{A})-\mbf \Phi_v\hat{C}=I,\,\mbf \Phi_w,\,\mbf \Phi_v\in \frac{1}{z}\mathcal{RH}_{\infty}\\
\end{aligned}
\end{equation}
where the constant $\C$ is a regularization parameter, and the affine constraint $\mbf\Phi_w(zI-\hat{A})-\mbf \Phi_v\hat{C}=I,\,\mbf \Phi_w,\,\mbf \Phi_v\in \frac{1}{z}\mathcal{RH}_{\infty}$
parameterizes all filters of the form~\eqref{EQN_Def_Predictor_class} that have bounded mean squared prediction error (see~\cite{wang2015localized} for more details). % Note that by constraining the $\Ht$ norm of the system responses, we restrict the effect of the parameter uncertainty term, while minimizing the suboptimality term.
%We could also include $\hat{R}$ in the optimization problem, but we omit it for simplicity. 
%The intuition behind the optimization problem is the following. The objective measures how much we deviate from the CE predictor. 
%Notice that we impose the additional robustness constraint $\norm{\matr{{cc}\mbf\Phi_w&\mbf\Phi_v}}_{\Hinf}\le \C$. 
%The constant $\C$ is a design choice and controls the tradeoff between robustness and certainty equivalence. Selecting a very large $\C$ leads to predictors with similar performance to CE. On the other hand, a small $\C$ imposes better stability margins but leads to potentially suboptimal performance. The intuition behind the constraint is explained in~Theorem~\ref{THM_SLS_Guarantees}; it guarantees that the parameter uncertainty will have a bounded effect on the prediction error.
%The regularization parameter C can vary between an upper bound, the $\Ht$ norm of the CE KF, and a lower bound, the minimum $\Ht$ filter based on $\hat A$ and $\hat C$. This corresponds to the case where we only care about robustness. Any lower value for C will result in an infeasible problem. 
%Both of these bounds can be computed based only on the data-driven \hat A, \hat C, L_{CE}. If we want more robustness we choose a C close to the lower bound; if we want less robustness we choose a larger C close to the upper bound. 
As we formalize in the following theorem, for appropriately selected regularization parameter $\C$ and sufficiently accurate estimation errors $(\epsilon_A,\epsilon_C)$, the robust KF has near optimal mean square estimation error.
\begin{theorem}[Robust Kalman Filter]\label{THM_SLS_Guarantees}
	Consider Problem~\ref{Prob_Synthesis} with Kalman filters from class~\eqref{EQN_Def_Predictor_class} synthesized using the robust SLS optimization problem \eqref{EQN_Robust_Optimization}. 
%
%		\item[a)] 	Let $\mbf L=-\mbf \Phi^{-1}_w\mbf \Phi_v$, where $(\mbf \Phi_w, \mbf \Phi_v)$ is the solution to optimization problem~\eqref{EQN_Robust_Optimization}. Then, the prediction error is upper bounded by:
%		\begin{equation}\label{EQN_Robust_Upper_Bound}
%		\tilde{J} \le \sqrt{3}\C\epsilon\norm{\matr{{c}\mbf \RR_AK\\ I}}_{\Hinf}\snorm{R^{1/2}}_2+\text{\normalfont opt}(\C)\snorm{R^{1/2}}_2
%		\end{equation}
%		where $\text{\normalfont opt}(\C)$ is the optimal value of~\eqref{EQN_Robust_Optimization} and is upper bounded: $\text{\normalfont opt}(\C)\le \C \paren{\snorm{K}_2+1+\epsilon}$.
%
%		
If the regularization parameter is chosen such that $\C\ge2 (1+\norm{K}_2)\norm{\RR_{A-KC}}_{\Ht}$, and further, the estimation errors $(\epsilon_A,\epsilon_C)$ are such that
		\begin{equation}\label{EQN_Robustness_Condition}
		(\epsilon_A+\epsilon_C\norm{K}_2)\norm{\RR_{A-KC}}_{\Hinf}\le 1/2
		\end{equation}
		then the robust SLS optimization problem is feasible, and the synthesized robust Kalman filter has mean squared prediction error upper-bounded by
		\begin{equation}\label{EQN_CERobust_Bound}
		\tilde{J} \le \sqrt{3}\C\epsilon\norm{\matr{{c}\mbf \RR_AK\\ I}}_{\Hinf}\snorm{R^{1/2}}_2+2\epsilon\norm{\RR_{A-KC}}_{\Ht}\snorm{R^{1/2}}_2,
		\end{equation}
		%%%previous statement
		%\C2\sqrt{2}\norm{\RR_{A}}_{\Ht}\sqrt{\norm{R}_2}\epsilon+\C\epsilon_K\sqrt{\norm{R}_2}+2\epsilon_K\norm{\RR_{A-KC}}_{\Ht}\sqrt{\norm{\hat{R}}_2} 
where $\epsilon=\max\set{\epsilon_A,\epsilon_C,\epsilon_K}$.
\end{theorem}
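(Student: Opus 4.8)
The plan is to marry the error decomposition that follows Lemma~\ref{LEM_error_analysis} with a System Level Synthesis robustness argument. Observe first that any optimal solution of~\eqref{EQN_Robust_Optimization} automatically meets the constraint $\norm{\matr{{cc}\mbf\Phi_w & \mbf\Phi_v}}_{\Ht}\le\C$, so the parameter-uncertainty term of that decomposition is immediately controlled by $\sqrt{3}\C\epsilon\,\norm{\matr{{c}\mbf\RR_A K\\ I}}_{\Hinf}\snorm{R^{1/2}}_2$ after pulling the constant factor $R^{1/2}$ out via submultiplicativity. It therefore remains to (a) prove the program is feasible, and (b) bound the suboptimality term $\norm{\mbf\Phi_w\hat K+\mbf\Phi_v}_{\Ht}$ at the optimum, which by optimality is no larger than its value at any feasible point.

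Both tasks are handled by exhibiting a single explicit feasible point built from the \emph{true} Kalman filter. The true responses $\mbf\Phi_w^\star=\RR_{A-KC}$ and $\mbf\Phi_v^\star=-\RR_{A-KC}K$ satisfy the affine constraint for the true pair $(A,C)$, but substituting them into the \emph{estimated} constraint leaves a residual $\mbf\Phi_w^\star(zI-\hat A)-\mbf\Phi_v^\star\hat C=I+\mbf\Delta$, where $\mbf\Delta\triangleq\mbf\Phi_w^\star\Delta_A+\mbf\Phi_v^\star\Delta_C$. Using $\snorm{\mbf\Phi_v^\star}_{\Hinf}\le\norm{K}_2\norm{\RR_{A-KC}}_{\Hinf}$ gives $\norm{\mbf\Delta}_{\Hinf}\le(\epsilon_A+\epsilon_C\norm{K}_2)\norm{\RR_{A-KC}}_{\Hinf}\le 1/2$ by the robustness condition~\eqref{EQN_Robustness_Condition}, so a small-gain argument makes $I+\mbf\Delta$ boundedly invertible in $\Hinf$ with $\norm{(I+\mbf\Delta)^{-1}}_{\Hinf}\le 2$. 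I then left-multiply to correct the responses, setting $\mbf\Phi_w=(I+\mbf\Delta)^{-1}\mbf\Phi_w^\star$ and $\mbf\Phi_v=(I+\mbf\Delta)^{-1}\mbf\Phi_v^\star$, so that $\mbf\Phi_w(zI-\hat A)-\mbf\Phi_v\hat C=I$ by construction; since $\mbf\Delta$ is strictly proper and stable, the correction keeps both maps in $\frac1z\mathcal{RH}_\infty$.

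To check the $\Ht$ constraint I factor $\matr{{cc}\mbf\Phi_w^\star & \mbf\Phi_v^\star}=\RR_{A-KC}\matr{{cc}I & -K}$ and bound $\norm{\matr{{cc}\mbf\Phi_w^\star & \mbf\Phi_v^\star}}_{\Ht}\le(1+\norm{K}_2)\norm{\RR_{A-KC}}_{\Ht}$; combined with $\norm{(I+\mbf\Delta)^{-1}}_{\Hinf}\le 2$ this gives $\norm{\matr{{cc}\mbf\Phi_w & \mbf\Phi_v}}_{\Ht}\le2(1+\norm{K}_2)\norm{\RR_{A-KC}}_{\Ht}\le\C$, which is precisely where the hypothesis on $\C$ enters and establishes feasibility. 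For the objective, the identity $\hat K=K-\Delta_K$ produces the cancellation $\mbf\Phi_w^\star\hat K+\mbf\Phi_v^\star=-\RR_{A-KC}\Delta_K$, whence $\norm{\mbf\Phi_w\hat K+\mbf\Phi_v}_{\Ht}=\norm{(I+\mbf\Delta)^{-1}\RR_{A-KC}\Delta_K}_{\Ht}\le 2\epsilon_K\norm{\RR_{A-KC}}_{\Ht}\le 2\epsilon\norm{\RR_{A-KC}}_{\Ht}$. Substituting the two bounds into the error decomposition yields~\eqref{EQN_CERobust_Bound}.

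I expect the main obstacle to be the robustness certification of the second paragraph — correctly handling the non-commutative matrix and transfer-function algebra when transplanting the true responses onto the estimated affine subspace, and in particular verifying that the left factor $(I+\mbf\Delta)^{-1}$ preserves strict properness and stability so the corrected maps genuinely lie in $\frac1z\mathcal{RH}_\infty$ rather than merely satisfying the algebraic identity. Once this feasible point is secured, the remaining steps (the $\Ht$ feasibility estimate, the $\Delta_K$ cancellation, and the final substitution) are routine.
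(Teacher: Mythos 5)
Your proposal is correct and follows essentially the same route as the paper's proof: the same decomposition into a parameter-uncertainty term (controlled by the $\Ht$ constraint) and a suboptimality term, the same feasible point $(I+\mbf\Delta)^{-1}\RR_{A-KC}\matr{{cc}I&-K}$ obtained by correcting the true KF responses via small gain under condition~\eqref{EQN_Robustness_Condition}, and the same cancellation $\mbf\Phi_w^\star\hat K+\mbf\Phi_v^\star=-\RR_{A-KC}\Delta_K$ to bound the optimal value by $2\epsilon\norm{\RR_{A-KC}}_{\Ht}$. The only cosmetic difference is that the paper additionally identifies the corrected responses as those induced by the static gain $K$ on the estimated system, i.e. $\RR_{\hat A-K\hat C}$ and $-\RR_{\hat A-K\hat C}K$, which your argument does not need.
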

We further note that whenever the system responses induced by the CE Kalman filter $\mbf{\tilde\Phi}_w \triangleq (zI-\hat A+\hat{K}\hat{C})^{-1}$, $\mbf{\tilde\Phi}_v \triangleq-(zI-\hat A+\hat{K}\hat{C})^{-1}\hat{K}$ are a feasible solution to optimization problem~\eqref{EQN_Robust_Optimization},  they are also optimal, resulting in a filter $\mbf L=\hat{K}$ with performance identical to the CE setting.

\section{End-to-End Sample Complexity for the Kalman Filter }\label{Section_Combined_Result}

Theorems~\ref{THM_CE_performance} and~\ref{THM_SLS_Guarantees} provide two different solutions to Problem~\ref{Prob_Synthesis}.
Combining these theorems
with the finite data system identification guarantees of~\cite{tsiamis2019finite}, we now derive, to the best of our knowledge, the first end-to-end sample complexity bounds for the Kalman filtering of an unknown system. For both the CE and robust Kalman filter, we show that the mean squared estimation error defined in \eqref{EQN_Def_Predictor_class} decreases with rate $O(1/\sqrt{N})$ up to logarithmic terms, where $N$ is the number of samples collected during the system identification step. 
The formal statement of the following theorem which addresses Problem~\ref{Prob_Sample_Complexity} can be found in~Theorem~\ref{THM_EndToEnd_Formal}.

\begin{theorem}[End-to-end guarantees, informal]\label{THM_EndToEnd}
Fix a failure probability $\delta \in (0,1)$, and assume that we are given a sample trajectory $\{y_t\}_{t=0}^N$ generated by system~\eqref{EQN_System}.  Then as long as $N\ge \mathrm{poly}(\log(1/\delta))$, we have with probability at least $1-\delta$ that the identification and filter synthesis pipeline of Fig.~\ref{Figure_CoarseID_architecture}, with system identification performed as in \cite{tsiamis2019finite} and filter synthesis performed as in Sections~\ref{Section_CE},~\ref{Section_Dynamic}, achieves mean squared prediction error satisfying
\[
\tilde{J}\le \C_{ID}\C_{KF}\tilde{O}\left(\sqrt{\frac{\log(1/\delta)}{N}}\right),\text{ where }\C_{KF}=\frac{1}{1-\rho(A-KC)}\frac{1}{1-\rho(A)}
\]	
and $\C_{ID}$ captures the difficulty of identifying system~\eqref{EQN_System} (see~\eqref{EQN_Cid} in the Appendix).  Here, $\tilde{O}$ hides constants, other system parameters, and logarithmic terms.
\end{theorem}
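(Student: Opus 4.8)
The plan is to use the parameter error $\epsilon=\max\set{\epsilon_A,\epsilon_C,\epsilon_K,\epsilon_R}$ as the interface between the two stages of the pipeline, combining the finite-sample identification bounds of \cite{tsiamis2019finite} with the synthesis guarantees of Theorems~\ref{THM_CE_performance} and~\ref{THM_SLS_Guarantees}. The first step is to invoke the identification result, which guarantees that whenever $N\ge\mathrm{poly}(\log(1/\delta))$, with probability at least $1-\delta$ each of $\epsilon_A,\epsilon_C,\epsilon_K,\epsilon_R$ — and hence $\epsilon$ — is bounded by $\C_{ID}\,\tilde{O}(\sqrt{\log(1/\delta)/N})$, where $\C_{ID}$ collects the system-dependent difficulty of identifying~\eqref{EQN_System}. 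This reduces the task to propagating this high-probability bound on $\epsilon$ through the deterministic synthesis bounds.

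The second step is to verify that the robustness hypotheses of the synthesis theorems are met on the event of the first step. Since $\epsilon$ decays as $N^{-1/2}$, once $N$ exceeds a threshold that is polynomial in $\log(1/\delta)$ and in the relevant system constants, $\epsilon$ is small enough that the CE robustness condition $2\tau(A-KC,\rho)(\epsilon_A+\epsilon_C(\norm{K}_2+\epsilon_K)+\epsilon_K\norm{C}_2)\le 1-\rho$ of Theorem~\ref{THM_CE_performance}, and the condition~\eqref{EQN_Robustness_Condition} together with the regularizer choice $\C\ge2(1+\norm{K}_2)\norm{\RR_{A-KC}}_{\Ht}$ of Theorem~\ref{THM_SLS_Guarantees}, both hold. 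Here I would fix $\rho=\tfrac{1}{2}(1+\rho(A-KC))$, so that $1-\rho=\tfrac{1}{2}(1-\rho(A-KC))$ and the transient factor $\tau(A-KC,\rho)$ is finite and may be absorbed into the hidden constants. With these hypotheses in force I apply the corresponding suboptimality bound, each of which has the form $\tilde{J}\le(\text{system constant})\cdot\epsilon$.

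The third step is to extract the announced prefactor $\C_{KF}=\frac{1}{1-\rho(A-KC)}\cdot\frac{1}{1-\rho(A)}$ from the system constant multiplying $\epsilon$. For the CE filter this constant is $\sqrt{3}\bar{\C}\norm{[\RR_A K;\,I]R^{1/2}}_{\Ht}$, and for the robust filter it is $\sqrt{3}\C\norm{[\RR_A K;\,I]}_{\Hinf}\snorm{R^{1/2}}_2+2\norm{\RR_{A-KC}}_{\Ht}\snorm{R^{1/2}}_2$ from~\eqref{EQN_CERobust_Bound}. The key calculation is to bound the resolvent norms: using $\norm{\RR_M}_{\Hinf}\le\sum_t\norm{M^t}_2\le\tau(M,\rho_M)/(1-\rho_M)$ for any $\rho(M)<\rho_M<1$, and the analogous (in fact slightly sharper) estimate for the $\Ht$ norm, the factors built from $\RR_A$ scale like $\frac{1}{1-\rho(A)}$ while $\bar{\C}$ and the factors built from $\RR_{A-KC}$ scale like $\frac{1}{1-\rho(A-KC)}$. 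Multiplying these two contributions produces the $\C_{KF}$ prefactor, with the transient constants $\tau(A,\cdot)$, $\tau(A-KC,\cdot)$, the norms $\norm{K}_2$, $\norm{C}_2$, $\snorm{R^{1/2}}_2$, and all logarithmic dependence swept into $\tilde{O}$ and the implicit constants. Substituting the $N^{-1/2}$ bound on $\epsilon$ from the first step then yields the stated rate.

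The main obstacle I anticipate is the bookkeeping in the third step: cleanly separating the $\frac{1}{1-\rho(A)}$ and $\frac{1}{1-\rho(A-KC)}$ dependencies so that their product — rather than some coupled or higher-order expression — appears as $\C_{KF}$. This requires tracking how the resolvent norms of $A$ and $A-KC$ enter each synthesis bound and confirming that the $\tau$ transient factors can be bounded uniformly and absorbed without silently re-introducing a dependence on the distance to the unit circle; note also that the $\Ht$ resolvent norms actually scale like $1/\sqrt{1-\rho(\cdot)}$, which is harmlessly upper-bounded by $1/(1-\rho(\cdot))$ for the informal statement. A secondary subtlety is ensuring the robust regularizer $\C$ can be taken at its lower bound $2(1+\norm{K}_2)\norm{\RR_{A-KC}}_{\Ht}$, so that the $\C$ in~\eqref{EQN_CERobust_Bound} contributes only the desired $\frac{1}{1-\rho(A-KC)}$ factor and no more.
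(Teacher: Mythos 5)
Your overall route is the same as the paper's: use $\epsilon$ as the interface between the identification and synthesis stages, check that the robustness hypotheses of Theorems~\ref{THM_CE_performance} and~\ref{THM_SLS_Guarantees} hold once $\epsilon=\tilde{O}(\sqrt{\log(1/\delta)/N})$ is small enough, and then coarsen the system-dependent prefactors into $\C_{KF}$. Indeed, the paper's formal version (Theorem~\ref{THM_EndToEnd_Formal}) keeps $\C_{KF}$ as $\inf_{\rho>\rho(A-KC)}\frac{\tau(A-KC,\rho)}{1-\rho}(1+\snorm{K}_2)\snorm{[\RR_AK;\,I]R^{1/2}}_{\Ht}$ for the CE filter and $\norm{\RR_{A-KC}}_{\Ht}(1+\snorm{K}_2)\snorm{[\RR_AK;\,I]R^{1/2}}_{\Hinf}$ for the robust filter, and these collapse to $\frac{1}{1-\rho(A-KC)}\cdot\frac{1}{1-\rho(A)}$ only after hiding the $\tau$'s and fixed norms in $\tilde{O}$ --- which is exactly your step three.

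There is, however, one genuine gap in your step one. The identification guarantees of \cite{tsiamis2019finite} (Theorem~\ref{SVD_THM_Main} in the appendix) do \emph{not} bound $\snorm{A-\hat{A}}_2$, $\snorm{C-\hat{C}}_2$, $\snorm{K-\hat{K}}_2$ directly: since only outputs are observed, the state-space basis is unidentifiable, and what is actually bounded is $\snorm{\hat{A}-T^*S^{-1}AST}_2$, $\snorm{\hat{C}-CST}_2$, $\snorm{\hat{K}-T^*S^{-1}K}_2$ for the similarity transformation $S$ of~\eqref{EQN_Similarity} and some orthonormal $T$. Consequently the synthesis theorems must be invoked with the transformed system $(T^*S^{-1}AST,\,CST,\,T^*S^{-1}K,\,R)$ playing the role of the true system: the spectral radii $\rho(A)$ and $\rho(A-KC)$ are invariant, but the constants $\norm{K}_2$, $\norm{C}_2$, $\tau(A-KC,\rho)$ and the resolvent norms entering both the robustness conditions and the final bound are not, and the resulting guarantee controls $\snorm{\tilde{x}_k-T^*S^{-1}x_k}_2$ rather than $\snorm{\tilde{x}_k-x_k}_2$. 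The paper's formal bound~\eqref{EQN_Formal_Result} is stated in precisely this estimated basis, and converting back to the original basis costs an extra factor $\norm{S}_2$, which the paper argues is bounded. Your proof needs this layer to be added --- either state the error in the estimated basis or carry $\norm{S}_2,\norm{S^{-1}}_2$ through explicitly. Two smaller omissions: the sample-size threshold must also account for $\sigma_n(G)$ so that the realization robustness condition~\eqref{SVD_EQN_robustness_condition} holds, and the failure probabilities of the regression and realization steps must be combined by a union bound (the paper obtains probability $1-6\delta-\delta_N$ and then rescales $\delta$), rather than a single $1-\delta$ event as asserted.
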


The bound derived in Theorem~\ref{THM_EndToEnd} highlights an interesting tension between how easy it is to identify the unknown system, and the robustness of the underlying optimal Kalman filter.  The constant $\C_{KF}$ captures how robust the underlying open loop system $A$ and closed loop Kalman filter $A-KC$ are, as measured by their spectral gaps $1-\rho(A)$ and $1-\rho(A-KC)$.  In particular, we expect $\C_{KF}$ to be small for systems that admit optimal KFs with favorable robustness and transient performance.  In contrast, the constant $\C_{ID}$ captures how easy it is to identify a system: recent results for the fully observed setting \citep{simchowitz2018learning,sarkar2018fast} suggest that systems with \emph{larger} spectral radius are in fact easier to identify, as they provide more ``signal'' to the identification algorithm.  In this way, our upper bound suggests that systems which properly balance between these two properties, robust transient performance and ease of identification, enjoy favorable sample complexity.

We also note that the degradation of our bound with the inverse of the spectral gap $1-\rho(A)$ appears to be a limitation of the proposed offline two step architecture -- indeed, Lemma~\ref{LEM_error_analysis} suggests that any estimation error in the state-space parameters $(A,C)$ causes an increase in mean squared prediction error as $\snorm{\RR_A} \propto (1-\rho(A))^{-1}$ increases. 
It remains open as to whether other prediction architectures would suffer from the same limitation.

\section{Simulations}\label{Section_Simulations}
We perform Monte Carlo simulations of the proposed pipeline for the system 
 \[
A=\matr{{ccc}0.8&1&0\\0&0.9&1\\0&0&0.9},\, C=\matr{{ccc}1&0&0}, \,K=\matr{{ccc}1.5320&	0.9401&
	0.1923}^*,\,R=10.6414.
\]
for varying sample lengths $N$. We simulate both the CE and robust Kalman filters, and set the regularization parameter to $\C=10$ in the robust SLS optimization problem~\eqref{EQN_Robust_Optimization}.
For each iteration, we first simulate system~\eqref{EQN_System} to obtain $N$ output samples. Then, we perform system identification to obtain the system parameters, after which we synthesize both CE and robust Kalman filters. Finally, we compute the mean prediction error of the designed filters. 

For the identification scheme, we used the variation of the MOESP algorithm~\cite{qin2006overview}, which is more sample efficient in practice than the one analyzed in~\cite{tsiamis2019finite}--see Algorithm~\ref{ALG_identification} and Section~\ref{Section_SVD}. 
The basis of the state-space representation returned by the subspace algorithm is data-dependent and varies with each simulation. For this reason, to compare the performance across different simulations, we compute the mean square error in terms of the original state space basis.
Note that the SLS optimization problem~\eqref{EQN_Robust_Optimization} is semi-infinite since we optimize over the infinite variables $\set{\Phi_{w,t}}_{t=1}^{\infty}$ and $\set{\Phi_{v,t}}_{t=0}^{\infty}$. To deal with this issue, we optimize over a finite horizon $T$--see for example~\cite{dean2018regret}, which makes the problem finite and tractable. Here, we selected $T=30$.

\begin{figure}[t] \centering{
		\subfigure[\footnotesize CE Kalman Filter]{
			\centering
			\includegraphics[scale=0.44]{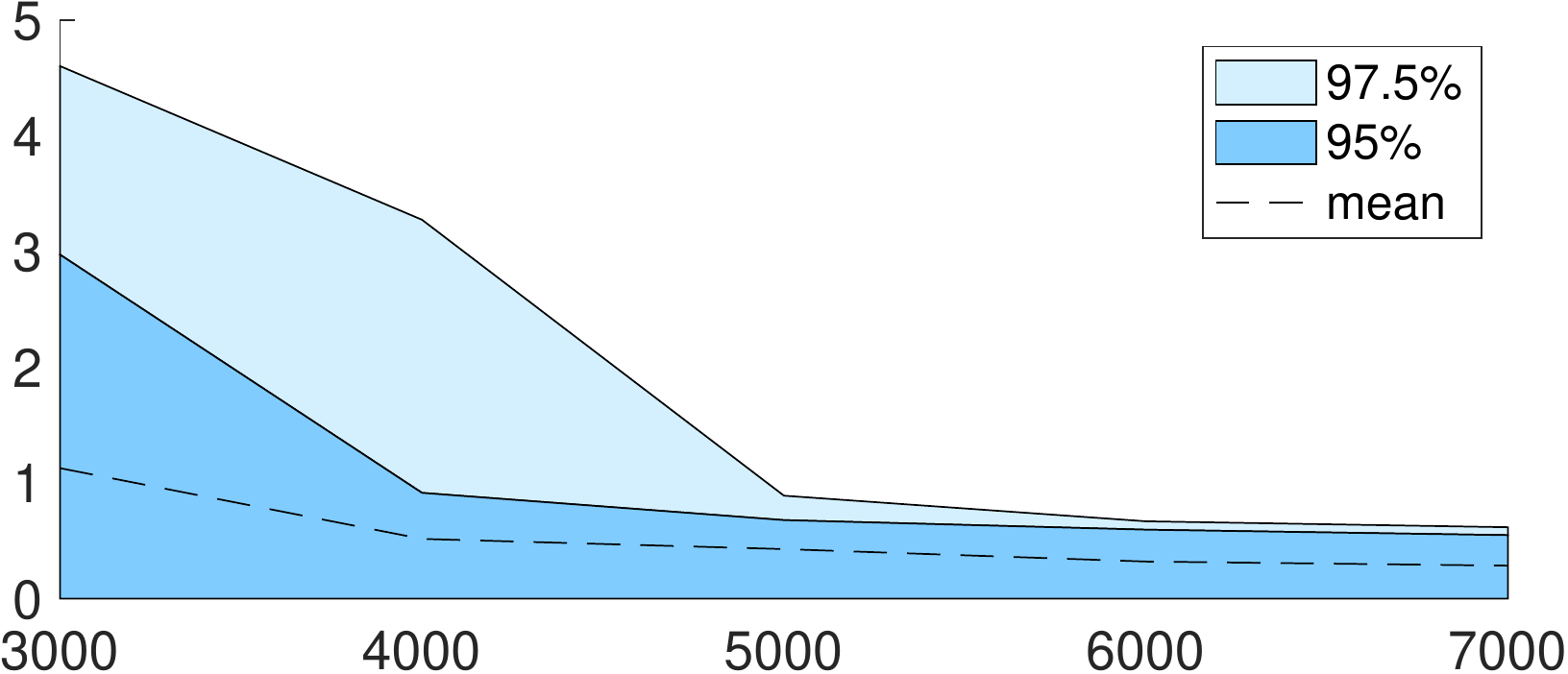}
	}
	\subfigure[\footnotesize Robust Kalman Filter]{
	\centering
	\includegraphics[scale=0.44]{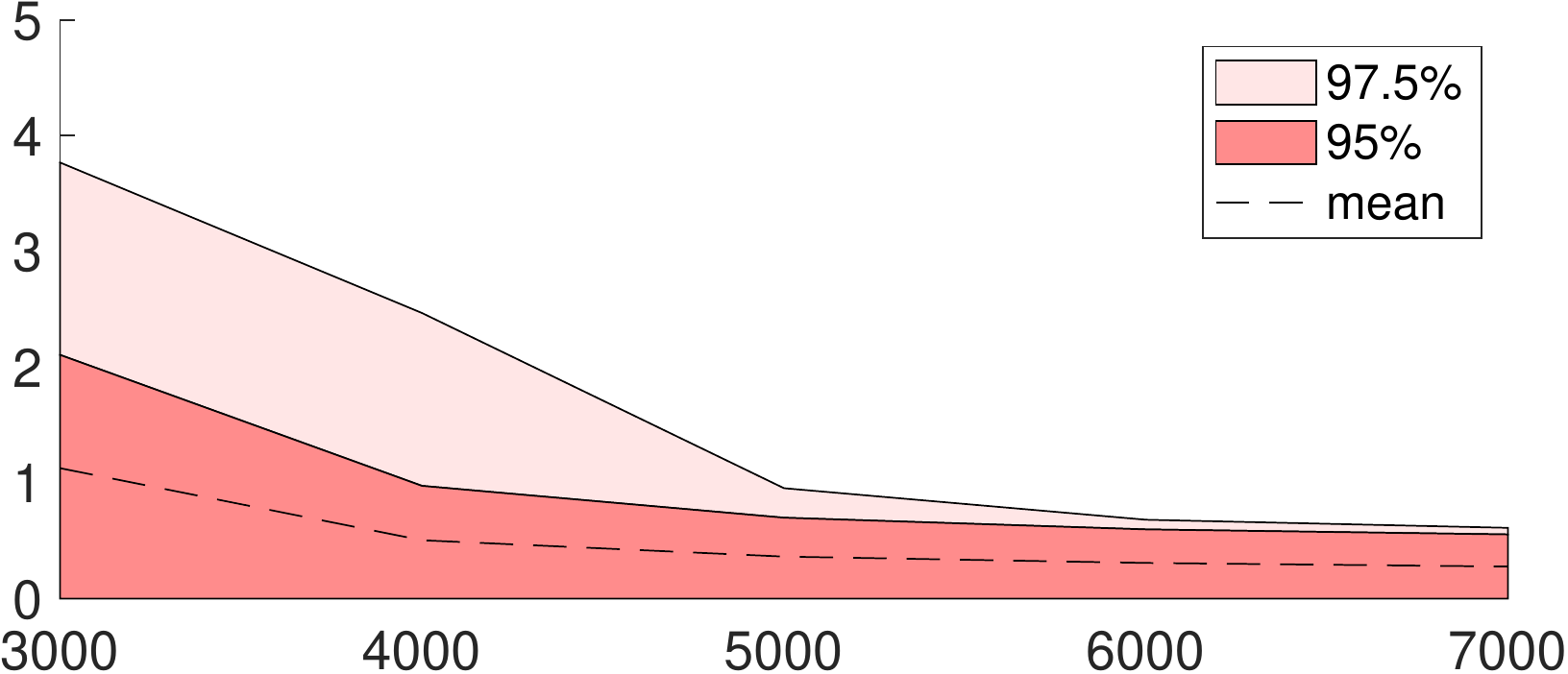}
}
}
	\caption{The $95\%$ and $97.5\%$ empirical percentiles for the mean squared prediction error $\tilde{J}$ of the CE and Robust Kalman filters. We run 1000 Monte Carlo simulations for different sample lengths $N$ ($x$-axis, number of samples).}
	\label{Figure_error_decay}
\end{figure}

Figure~\ref{Figure_error_decay} (a) and (b) show the empirically computed mean squared prediction errors of the CE and Robust Kalman filters, with the mean, 95th, and 97.5th percentiles being shown. Notice that both errors decrease with a rate of $1/\sqrt{N}$, and that while the average behavior of both filters is quite similar, there is a noticeable gap in their tail behaviors.  We observe that the most significant gap between the CE and Robust Kalman filters occurs when the eigenvalues of the CE matrix $\hat{A}-L_{CE}\hat{C}$ are close to the unit circle. Fig.~\ref{Figure_performance_improvement} shows the empirical distribution of mean squared prediction errors conditioned on the event that $\rho(\hat{A}-L_{CE}\hat{C})>0.97$. In this case, the CE filter can exhibit \emph{extremely} poor mean squared prediction error, with the worst observed error (not shown in Fig.~\ref{Figure_performance_improvement} in the interst of space) approximately equal to 70 -- in contrast, the worst error exhibited by the robust Kalman filter was approximately equal to 5. Thus, we were able to achieve a 14x reduction in worst-case mean squared error.  For some simulations the robust KF can exhibit worse performance compared to the CE Kalman filter. However, over all simulations, the mean squared error achieved by the robust Kalman filter was at most 1.64x greater than that achieved by CE Kalman filter.

\begin{figure}[t] \centering{
		\includegraphics[scale=0.44]{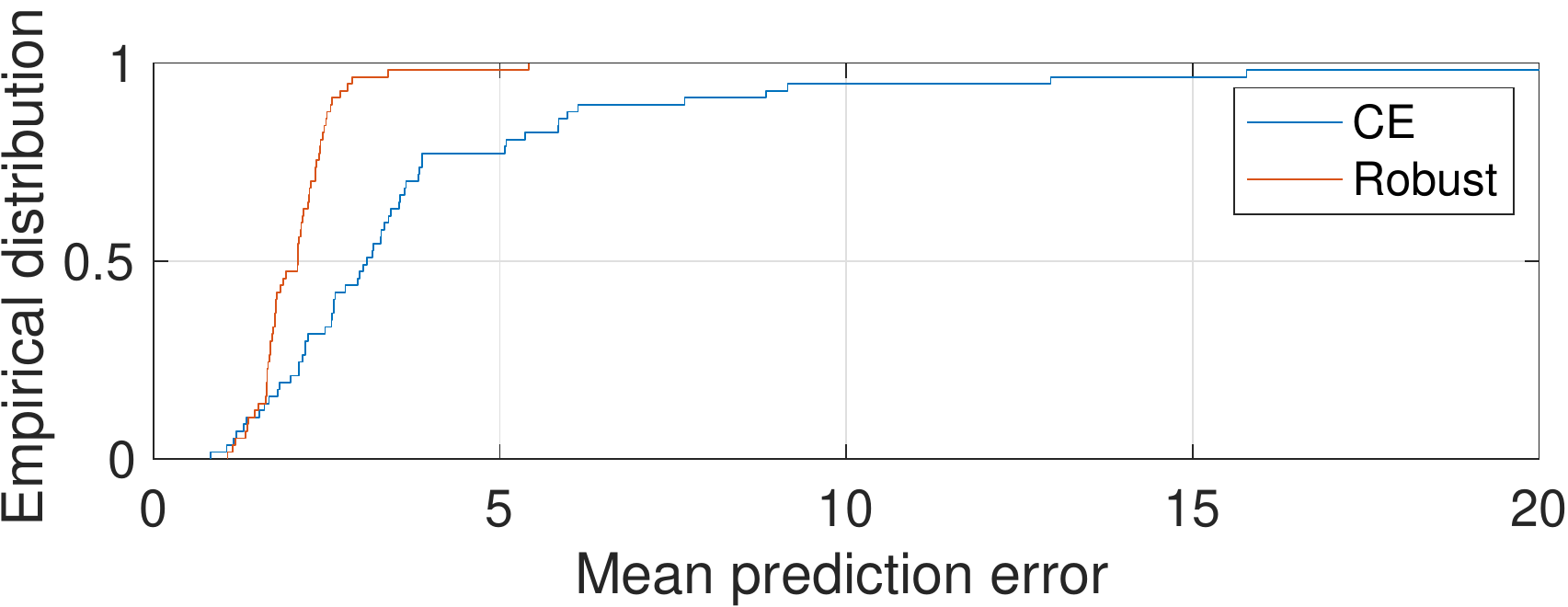}}   
	\caption{Performance improvement for the robust KF {conditioned on the event} that $\rho(\hat{A}-L_{CE}\hat{C})>0.97$. }
	\label{Figure_performance_improvement}
\end{figure}

\section{Conclusions \& Future work}
\label{Section_Conclusion}
In this paper, we proposed and analyzed a system identification and filter synthesis pipeline.  Leveraging contemporary finite data guarantees from system identification \citep{tsiamis2019finite}, as well as novel parameterizations of robust Kalman filters \citep{wang2015localized}, we provided, to the best of our knowledge, the first end-to-end sample complexity bounds for the Kalman filtering of an unknown autonomous LTI system.  Our analysis revealed that, depending on the spectral properties of the CE Kalman filter, a robust Kalman filter approach may lead to improved performance.  In future work, we would like to explore how to improve robustness and performance  by further exploiting information about system uncertainty, as well as how to integrate our results into an optimal control framework, such as Linear Quadratic Gaussian control.

\bibliography{Literature_Identification}
\appendix
\section{Properties of the CE Kalman Filter}
The following result, which follows from the theory of non-stabilizable Riccati equations~\cite{chan1984convergence}, describes the form of the certainty equivalent gain.
\begin{lemma}\label{LEM_CE_Form}
	Consider the assumptions of Problem~\ref{Prob_Synthesis}. Assume that $(\hat{A},\hat{C})$ is observable and $\hat{R}$ is positive definite. The CE Kalman filter gain $L_{CE}$~\eqref{EQN_DEF_CE_predictor} has the following properties:
	\begin{itemize}[wide, labelwidth=!, labelindent=0pt]
		\item If $\rho(\hat{A}-\hat{K}\hat{C})<1$, then $L_{CE}=\hat{K}$ and $\hat{A}-L_{CE}\hat{C}$ is asymptotically stable.
		\item If $\rho(\hat{A}-\hat{K}\hat{C})>1$, and $\hat{A}-\hat{K}\hat{C}$ has no eigenvalues on the unit circle, then $\hat{A}-L_{CE}\hat{C}$ is asymptotically stable. 
		\item If $\hat{A}-\hat{K}\hat{C}$ has eigenvalues on the unit circle, then~\eqref{EQN_DEF_CE_Riccati}  does not admit a stabilizing solution.
	\end{itemize}
\end{lemma}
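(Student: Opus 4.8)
The plan is to reduce the Riccati equation~\eqref{EQN_DEF_CE_Riccati} to a standard filtering DARE with \emph{zero process noise} via the usual correlation-elimination step, and then invoke the non-stabilizable Riccati theory of~\cite{chan1984convergence}. Writing out the covariance blocks, the special structure $\hat{Q}=\hat{K}\hat{R}\hat{K}^*$ and $\hat{S}=\hat{K}\hat{R}$ gives, for the correlation-corrected quantities $\bar{A}\triangleq\hat{A}-\hat{S}\hat{R}^{-1}\hat{C}$ and $\bar{Q}\triangleq\hat{Q}-\hat{S}\hat{R}^{-1}\hat{S}^*$, the remarkably clean identities $\bar{A}=\hat{A}-\hat{K}\hat{C}$ and $\bar{Q}=0$. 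First I would verify the algebraic fact that~\eqref{EQN_DEF_CE_Riccati} is equivalent to the standard DARE in $(\bar{A},\hat{C},\bar{Q},\hat{R})$, and that the closed-loop map is unchanged, i.e. $\hat{A}-L_{CE}\hat{C}=\bar{A}-\bar{L}\hat{C}$ where $\bar{L}=\bar{A}P\hat{C}^*(\hat{C}P\hat{C}^*+\hat{R})^{-1}$ is the reduced gain; this is a short computation using $\hat{C}P\hat{C}^*(\hat{C}P\hat{C}^*+\hat{R})^{-1}=I-\hat{R}(\hat{C}P\hat{C}^*+\hat{R})^{-1}$.

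With the reduction in hand, the first bullet follows by inspection. When $\rho(\hat{A}-\hat{K}\hat{C})<1$, substituting $P=0$ makes the right-hand side of~\eqref{EQN_DEF_CE_Riccati} equal to $\hat{Q}-\hat{S}\hat{R}^{-1}\hat{S}^*=\bar{Q}=0$, so $P=0$ is a solution, the associated gain is $L_{CE}=\hat{S}\hat{R}^{-1}=\hat{K}$, and the closed loop is exactly $\hat{A}-\hat{K}\hat{C}$, which is asymptotically stable by hypothesis. Since a stabilizing solution of the DARE is unique, $P=0$ \emph{is} the stabilizing solution and $L_{CE}=\hat{K}$.

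For the remaining two bullets I would appeal to the characterization of stabilizing solutions for possibly non-stabilizable filtering Riccati equations. Because the estimated pair $(\hat{A},\hat{C})$ is observable and observability is invariant under output injection, the reduced pair $(\bar{A},\hat{C})=(\hat{A}-\hat{K}\hat{C},\hat{C})$ is observable, hence detectable, so by~\cite{chan1984convergence} the reduced DARE admits a maximal (strong) solution $P^+\ge 0$ whose closed-loop matrix $\bar{A}-\bar{L}\hat{C}$ has all eigenvalues in the closed unit disk. The crucial point is that, since $\bar{Q}=0$, the factor $\bar{Q}^{1/2}=0$ renders \emph{every} mode of $\bar{A}$ unreachable from the (fictitious) process noise; the non-stabilizable Riccati criterion then states that the strong solution is stabilizing -- i.e. $\bar{A}-\bar{L}\hat{C}$ is strictly stable -- precisely when $\bar{A}=\hat{A}-\hat{K}\hat{C}$ has \emph{no eigenvalue on the unit circle}. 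This immediately yields the second bullet (no eigenvalues on the circle, so $\hat{A}-L_{CE}\hat{C}=\bar{A}-\bar{L}\hat{C}$ is asymptotically stable, regardless of whether $\rho(\hat{A}-\hat{K}\hat{C})$ exceeds one) and the third bullet (an eigenvalue on the circle forces a marginal mode into the closed loop, so no stabilizing solution can exist). The main obstacle is pinning down the exact hypotheses under which the results of~\cite{chan1984convergence} apply and translating their ``unreachable modes on the unit circle'' condition through the $\bar{Q}=0$ reduction; an equivalent and perhaps more transparent route is spectral factorization of the output spectral density $\Phi_y(z)=W(z)\hat{R}W^*(z^{-1})$, where $W(z)=I+\hat{C}(zI-\hat{A})^{-1}\hat{K}$ has its zeros at the eigenvalues of $\hat{A}-\hat{K}\hat{C}$: the stabilizing solution corresponds to the minimum-phase spectral factor, which exists with strictly-stable zeros iff $\Phi_y$ has no zeros on the unit circle.
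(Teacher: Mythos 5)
Your proposal is correct and follows essentially the same route as the paper's proof: the correlation-elimination step reducing~\eqref{EQN_DEF_CE_Riccati} to a DARE in $\bar{A}=\hat{A}-\hat{K}\hat{C}$ with $\bar{Q}=0$, the trivial stabilizing solution $P=0$ (hence $L_{CE}=\hat{K}$) in the stable case, and Theorem~3.1 of~\cite{chan1984convergence} for the remaining two bullets. Your additional details -- verifying the gain identity, noting that observability is preserved under output injection so detectability holds, and spelling out that $\bar{Q}=0$ makes every mode unreachable -- are correct and in fact make explicit steps the paper leaves implicit.
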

\begin{proof}
After some algebraic manipulations--see also~\cite{kailath2000linear}, the Riccati equation~\eqref{EQN_DEF_CE_Riccati} can be rewritten as:
\begin{align*}
P=(\hat{A}-\hat{K}\hat{C})P(\hat{A}-\hat{K}\hat{C})^*-(\hat{A}-\hat{K}\hat{C})P\hat{C}^*(\hat{C}P\hat{C}^*+\hat{R})^{-1}\hat{C}P(\hat{A}-\hat{K}\hat{C})^*
\end{align*}
Notice that there is no $Q$ term in the equivalent algebraic Riccati equation.
If $\hat{A}-\hat{K}\hat{C}$ is already stable then the trivial solution $P=0$ is the stabilizing one. 
If $\hat{A}-\hat{K}\hat{C}$ is not asymptotically stable the results follow from Theorem~3.1 of~\cite{chan1984convergence}.
\end{proof}
\section{SLS preliminaries}
For this subsection, we assume that $\hat{A}=A,\hat{C}=C,\hat{K}=K,\hat{R}=R$. 
Using bold symbols to denote the frequency representation of signals, we can rewrite the original system equation~\eqref{EQN_System} and the predictor equation~\eqref{EQN_Def_Predictor_class} as:
\[
(zI-A+KC)\mbf{x}=K\mbf{y} ,\quad (zI-A+\mbf L C)\mbf{\tilde{x}}=\mbf L \mbf y.
\] 
Subtracting the two equations and using the fact that $\mbf y=C\mbf x +\mbf e $, we obtain:
\begin{equation*}
\mbf{x}-\mbf{\tilde x}=(zI-A+\mbf{L}C)^{-1}K\mbf e-(zI-A+\mbf{L}C)^{-1}\mbf{L}\mbf e
\end{equation*}
Define the responses to $K\mbf e$ and $\mbf e$ by $\mbf \Phi_{w}\triangleq(zI-A+\mbf L C)^{-1}$ and $\mbf \Phi_{v}\triangleq -(zI-A+\mbf L C)^{-1}\mbf L$ respectively. Then the error obtains the linear representation:
\[
\mbf{x}-\mbf{\tilde x}=(\mbf \Phi_w K+\mbf \Phi_v)\mbf e
\]
The case of $A\neq \hat A$, $C\neq \hat C$, $K\neq \hat K$ can be found in Lemma~\ref{LEM_error_analysis}.
The following result from~\cite{wang2015localized} parameterizes the set of stable closed-loop transfer matrices $\mbf L$.
\begin{proposition}[Predictor parameterization]\label{PROP_Parameterization}
	Consider system~\eqref{EQN_System}.	Let $\frac{1}{z}\mathcal{RH}_{\infty}$ denote the set of real rational stable strictly proper transfer matrices. The closed-loop responses $\mbf \Phi_w,\,\mbf \Phi_v$ from $K\mbf e$ and $\mbf e$ to $\mbf{x}-\mbf{\tilde x}$ can be induced by an internally stable predictor $\mbf L$ if and only if they belong to the following affine subspace:
	\begin{equation}\label{EQN_Affine_Constraint}
	\matr{{cc}\mbf \Phi_w& \mbf \Phi_v}\matr{{c}zI-A\\-C}=I,\,\mbf \Phi_w,\,\mbf \Phi_v\in \frac{1}{z}\mathcal{RH}_{\infty}.
	\end{equation}
	Given the responses, we can parameterize the prediction gain as $\mbf L=-\mbf \Phi^{-1}_w\mbf \Phi_v$. 
\end{proposition}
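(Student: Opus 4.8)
The plan is to establish the two implications of the equivalence separately, and then confirm the recovery formula $\mbf L = -\mbf \Phi_w^{-1}\mbf \Phi_v$; this is the estimation-side analogue of the standard System Level Synthesis parameterization, so the argument closely mirrors~\cite{wang2015localized}.

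For \emph{necessity}, I would start from a proper, internally stabilizing predictor $\mbf L$, for which $zI-A+\mbf L C$ is invertible and $\mbf \Phi_w=(zI-A+\mbf L C)^{-1}$, $\mbf \Phi_v=-\mbf \Phi_w\mbf L$. The affine identity then drops out of a one-line computation,
\[
\matr{{cc}\mbf \Phi_w& \mbf \Phi_v}\matr{{c}zI-A\\-C}=\mbf \Phi_w(zI-A)+\mbf \Phi_w\mbf L C=\mbf \Phi_w(zI-A+\mbf L C)=I,
\]
and membership in $\frac{1}{z}\mathcal{RH}_{\infty}$ follows because $\mbf \Phi_w$ is the resolvent of an internally stable loop (hence stable and strictly proper), while $\mbf \Phi_v=-\mbf \Phi_w\mbf L$ inherits strict properness and stability from $\mbf \Phi_w$ together with properness of $\mbf L$.

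For \emph{sufficiency}, I would take $\mbf \Phi_w,\mbf \Phi_v\in\frac{1}{z}\mathcal{RH}_{\infty}$ satisfying the affine constraint and first extract the leading Markov parameter: letting $z\to\infty$ and using strict properness of $\mbf \Phi_v$ and of $\mbf \Phi_w A$ forces the $z^{-1}$-coefficient of $\mbf \Phi_w$ to equal $I$. Consequently $\mbf \Phi_w$ has a proper inverse, so $\mbf L\triangleq-\mbf \Phi_w^{-1}\mbf \Phi_v$ is well defined and proper. Substituting $\mbf \Phi_v=-\mbf \Phi_w\mbf L$ back into the constraint gives $\mbf \Phi_w(zI-A+\mbf L C)=I$, i.e.\ $\mbf \Phi_w=(zI-A+\mbf L C)^{-1}$ and $\mbf \Phi_v=-(zI-A+\mbf L C)^{-1}\mbf L$, so this $\mbf L$ reproduces exactly the prescribed responses and the recovery formula is verified.

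The remaining and, I expect, most delicate step is to certify \emph{internal} stability of the recovered $\mbf L$, rather than mere stability of the two prescribed maps. The idea is that $\mbf \Phi_w,\mbf \Phi_v$ govern every internal signal of the interconnection: the predictor state satisfies $\mbf{\tilde x}=-\mbf \Phi_v\mbf y$ with $\mbf y=C\mbf x+\mbf e$, the true state is $\mbf x=\RR_{A-KC}K\mbf e$ (stable under Assumption~\ref{ASS_Kalman}), and the error admits the representation $\mbf x-\mbf{\tilde x}=(\mbf \Phi_w K+\mbf \Phi_v)\mbf e$ derived above. Since all of these maps lie in $\frac{1}{z}\mathcal{RH}_{\infty}$ by hypothesis, no unstable hidden mode can arise from the cancellation implicit in $\mbf L=-\mbf \Phi_w^{-1}\mbf \Phi_v$; making this absence-of-unstable-cancellation argument precise is exactly the technical core of the SLS parameterization, and I would complete it by invoking the corresponding result of~\cite{wang2015localized}.
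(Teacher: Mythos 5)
Your proof is sound in structure, and it matches the paper's treatment in the only sense available: the paper does not prove Proposition~\ref{PROP_Parameterization} at all, it imports it wholesale from \cite{wang2015localized}, which is also where you send the one genuinely technical step (absence of unstable cancellations when recovering $\mbf L$). The algebra you do spell out is correct: necessity is the one-line computation $\mbf \Phi_w(zI-A)+\mbf \Phi_w\mbf L C=I$ together with stability and strict properness of the closed-loop maps; in the converse direction, evaluating the affine constraint as $z\to\infty$ correctly forces the leading coefficient $\Phi_{w,1}=I$ (consistent with the recursion $\Phi_{w,t+1}=\Phi_{w,t}A+\Phi_{v,t}C$, $\Phi_{w,1}=I$ stated in the paper's preliminaries), after which substituting $\mbf \Phi_v=-\mbf \Phi_w\mbf L$ recovers $\mbf \Phi_w=(zI-A+\mbf L C)^{-1}$. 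Your observation that the filtering interconnection has no feedback from $\tilde{\mbf x}$ to the plant, so internal stability reduces to stability of the maps taking $\mbf e$ and $\mbf y$ to the internal signals, is exactly why the estimation case is easier than feedback SLS, and deferring the formalization to \cite{wang2015localized} is no weaker than what the paper itself does.

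Two slips are worth fixing, neither fatal. First, $\mbf \Phi_w$ does \emph{not} have a proper inverse: since $\mbf \Phi_w=z^{-1}I+O(z^{-2})$, its inverse grows like $zI$ and is improper; what saves the construction is that $\mbf L=-\mbf \Phi_w^{-1}\mbf \Phi_v$ is proper because the strict properness of $\mbf \Phi_v$ absorbs the extra factor of $z$. Second, the true state satisfies $\mbf x=\RR_A K\mbf e$ (this is the formula used in the paper's proof of Lemma~\ref{LEM_error_analysis}), not $\mbf x=\RR_{A-KC}K\mbf e$; the closed-loop resolvent enters only through $\mbf x=\RR_{A-KC}K\mbf y$. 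Under Assumption~\ref{ASS_Kalman} both $A$ and $A-KC$ are stable, so the stability conclusion you draw survives, but the formula as written is incorrect.
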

Let $\mbf \Phi_w=\sum_{t=0}^{\infty}\Phi_{w,t}z^{-t}$ and $\mbf \Phi_v=\sum_{t=0}^{\infty}\Phi_{v,t}z^{-t}$. The strictly proper condition enforces the constraint
$
\Phi_{w,0}=0,\Phi_{v,0}=0.
$
The affine constraints simply imply that the system responses $\mbf \Phi_w,\mbf \Phi_v$ should satisfy the linear system recursions:
\[
\Phi_{w,t+1}=\Phi_{w,t}A+\Phi_{v,t}C, \,t\ge 1,\quad \Phi_{w,1}=I
\]

Assuming that the predictor is internally stable, then the mean square error is equal to 
\[
\tilde{J}=\snorm{(\mbf \Phi_w K+\mbf \Phi_v)R^{1/2}}_{\Ht},
\]
where $\snorm{\cdot}_{\Ht}$ is the $\Ht$ system norm.
Hence, the error-free Kalman filter synthesis problem could be re-written as:
\[
\min_{\mbf \Phi_w, \mbf\Phi_v} \snorm{(\mbf \Phi_w K+\mbf \Phi_v)R^{1/2}}_{\Ht},\quad \text{s.t. } \eqref{EQN_Affine_Constraint}
\]
Of course, when the model knowledge is perfect, the solution to this problem is trivially $\mbf L=K$, $\mbf \Phi_w=(zI-A+KC)^{-1}$, $\mbf \Phi_v=-(zI-A+KC)^{-1}K$, $\tilde{J}=0$. 
\section{Proofs}

\subsection*{Proof of Theorem~\ref*{THM_CE_performance}}
	Let $\Delta_{A_{cl}}=(A-KC)-(\hat{A}-\hat{K}\hat{C})$.  
	By adding and subtracting $\hat{K}C$, we obtain the bound:
\[
	\snorm{\Delta_{A_{cl}}}\le \epsilon_A+\snorm{\hat{K}}_2\epsilon_C+\epsilon_K\snorm{C}_2\le \epsilon_A+(\norm{K}_2+\epsilon_K)\epsilon_C+
	\epsilon_K\norm{C}_2
\]
	Hence, from the robustness condition of the theorem it follows that
		\begin{equation}\label{APP_CE_EQN_AUX_1}
2\tau(A-KC,\rho)\snorm{\Delta_{A_{cl}}}_2\le 1-\rho
	\end{equation}

	Now, from Lemma~5 in~\cite{mania2019certainty} it follows that:
	\begin{equation}	\label{APP_CE_EQN_AUX_2}
	\snorm{(\hat{A}-\hat{K}\hat{C})^k}_2=\snorm{(A-KC-\Delta_{A_{cl}})^k}_2\le 	\tau(A-KC,\rho)\paren{	\tau(A-KC,\rho)\norm{\Delta_{A_{cl}}}_2+\rho}^k
	\end{equation}
	Combining~\eqref{APP_CE_EQN_AUX_1},~\eqref{APP_CE_EQN_AUX_2}, we finally obtain:
	\[
		\snorm{(\hat{A}-\hat{K}\hat{C})^k}_2\le \tau(A-KC,\rho) \paren{\frac{1+\rho}{2}}^k.
	\]
	Thus, the $\Hinf$ norm of $\RR_{\hat{A}-\hat{K}\hat{C}}$ is upper bounded by 
	\begin{align*}
	\norm{\RR_{\hat{A}-\hat{K}\hat{C}}}_{\Hinf}&\le \sum_{t=0}^{\infty}\snorm{(\hat{A}-\hat{K}\hat{C})^{t}}_2\\
	&\le \tau(A-KC,\rho) \sum_{k=0}^{\infty}  \paren{\frac{1+\rho}{2}}^k=\frac{2\tau(A-KC,\rho)}{1-\rho}
	\end{align*}
	This further implies
	\begin{align*}
		\norm{\matr{{cc}\RR_{\hat{A}-\hat{K}\hat{C}}&-\RR_{\hat{A}-\hat{K}\hat{C}}\hat{K}}}_{\Hinf}&\le (1+\norm{K}_2+\epsilon_K)	\norm{\RR_{\hat{A}-\hat{K}\hat{C}}}_{\Hinf}\\
		&\le (1+\norm{K}_2+\epsilon_K)\frac{2\tau(A-KC,\rho)}{1-\rho}.
	\end{align*}
	
	Now let $\Phi_w=\RR_{\hat{A}-\hat{K}\hat{C}}$ and $\Phi_v=-\RR_{\hat{A}-\hat{K}\hat{C}}\hat{K}$. The proof follows from~Lemma~\ref{LEM_error_analysis} and the inequality
	\begin{align*}
	\norm{\matr{{cc}\mbf\Phi_w & \mbf\Phi_v}\matr{{cc}\Delta_A&\Delta_K\\  \Delta_C&0}\matr{{c}\mbf \RR_AK\\ I}}_{\Ht}&\le \norm{\matr{{cc}\mbf\Phi_w & \mbf\Phi_v}}_{\Hinf}\norm{\matr{{cc}\Delta_A&\Delta_K\\  \Delta_C&0}\matr{{c}\mbf \RR_AK\\ I}R^{1/2}}_{\Ht}\\
	&\le  \sqrt{3}\epsilon(1+\norm{K}_2+\epsilon_K)\frac{2\tau(A-KC,\rho)}{1-\rho}\norm{\matr{{c}\mbf \RR_AK\\ I}R^{1/2}}_{\Ht}
	\end{align*}
\hfill $\blacksquare$
\subsection*{Proof of Lemma~\ref{LEM_error_analysis}}
	It is sufficient to show that
	\[
	\mbf{x}-\mbf{\tilde x}=\set{\paren{\mbf\Phi_w\Delta_A+\mbf\Phi_v\Delta_C}\mbf\RR_AK+\mbf \Phi_w \Delta_K+\mbf\Phi_w\hat{K}+\mbf\Phi_v}\mbf e,
	\]
	then the result follows from the definition of $\Ht$ norm and the fact that $R^{-1/2}e$ is white noise with unit variance.
	
	In frequency domain, equations~\eqref{EQN_System},~\eqref{EQN_Def_Predictor_class} can be rewritten as
	\begin{equation*}
	(zI-\hat{A})\mbf{\tilde x}=\mbf{L}(\mbf{y}-\hat{C}\mbf{\tilde{x}}), \,(zI-A)\mbf{x}=K(\mbf{y}-C\mbf{x})
	\end{equation*}
	Subtracting the two equations yields:
	\[
	(zI-\hat{A}+\mbf L\hat{C})(\mbf x-\mbf{\tilde x})+(-\Delta_A-\mbf{L}\hat{C}+KC)\mbf x=(K-\mbf L)\mbf y
	\]
	Using the fact that $\mbf y=C\mbf x +\mbf e $, we obtain:
	\begin{equation*}
	(zI-\hat{A}+\mbf{L}\hat{C})(\mbf{ x}-\mbf{\tilde x})=(\Delta_A-\mbf{L}[C-\hat{C}])\mbf x+(K-\mbf{L})\mbf e.
	\end{equation*}
	Multiplying from the left by $\mbf \Phi_w$ and using the fact that $\mbf \Phi_v=-\mbf\Phi_w \mbf L$
	\begin{equation*}
	\mbf{x}-\mbf{\tilde x}=(\mbf \Phi_w\Delta_A+\mbf\Phi_v\Delta_C)\mbf x+(\mbf \Phi_wK+\mbf\Phi_v)\mbf e
	\end{equation*}
	The result follows from adding and subtracting $\mbf \Phi_w\hat{K} \mbf e$ and the fact that $\mbf x=\RR_A K \mbf e$.
\hfill $\blacksquare$
\subsection*{Proof of Theorem~\ref{THM_SLS_Guarantees}}
	
	\noindent	\textbf{Step a:} First we prove that when  optimization problem~\eqref{EQN_Robust_Optimization} is feasible, the the mean square error is bounded by:
			\begin{equation}\label{EQN_Robust_Upper_Bound}
			\tilde{J} \le \sqrt{3}\C\epsilon\norm{\matr{{c}\mbf \RR_AK\\ I}}_{\Hinf}\snorm{R^{1/2}}_2+\text{\normalfont opt}(\C)\snorm{R^{1/2}}_2.
			\end{equation}
	 Assume  that $(\mbf\Phi_w,\mbf\Phi_v)$ is an optimal solution to~\eqref{EQN_Robust_Optimization}. From Lemma~\ref{LEM_error_analysis}:
	\begin{align*}
	\tilde{J}\le& \sqrt{3}\epsilon\norm{\matr{{cc}\mbf\Phi_w & \mbf\Phi_v}}_{\Ht}\norm{\matr{{c}\mbf \RR_AK\\ I}}_{\Hinf}\snorm{R^{1/2}}_2+\norm{(\mbf \Phi_w \hat{K}+\mbf \Phi_v)}_{\Ht}\snorm{R^{1/2}}_2,\\
	\le &\sqrt{3}\C\epsilon\norm{\matr{{c}\mbf \RR_AK\\ I}}_{\Hinf}\snorm{R^{1/2}}_2+\text{\normalfont opt}(\C)\snorm{R^{1/2}}_2,
	\end{align*}
	where we used $\norm{\matr{{cc}\mbf\Phi_w & \mbf\Phi_v}}_{\Ht}\le \C$ and optimality of $(\mbf\Phi_w,\mbf\Phi_v)$.

	\noindent	\textbf{Step b:}
	We prove that under condition~\eqref{EQN_Robustness_Condition}, the static Kalman gain $K$ is a feasible gain for~\eqref{EQN_Robust_Optimization}; equivalently, the responses $\mbf{\tilde{\Phi}}_{w}=\RR_{\hat{A}-K\hat{C}}$, and $\mbf{\tilde{\Phi}}_{v}=-\RR_{\hat{A}-K\hat{C}}K$ satisfy the constraints of~\eqref{EQN_Robust_Optimization}. 
	Consider the responses $\mbf \Phi_{w,opt}\triangleq \RR_{A-KC}$ and $\mbf \Phi_{v,opt}\triangleq-\RR_{A-KC}K$, which are optimal for the original unknown system. They satisfy the affine relation for the original system:
	\[
	\matr{{cc}\mbf \Phi_{w,opt}& \mbf \Phi_{v,opt}}\matr{{c}zI-A\\-C}=I
	\]
	Adding and subtracting the estimated matrices, we can show that they also satisfy a perturbed affine relation for the estimated system:
	\[
	\matr{{cc}\mbf \Phi_{w,opt}& \mbf \Phi_{v,opt}}\matr{{c}zI-\hat{A}\\-\hat{C}}=I+\underbrace{(\mbf \Phi_{w,opt}\delta_A+\mbf \Phi_{v,opt}\delta_C)}_{\mbf \Delta}
	\]
If the perturbation $(I+\mbf \Delta)^{-1}$ is stable, we can multiply both sides from the left, which yields:
	\[
	\matr{{cc}\mbf{\tilde{\Phi}}_{w}& \mbf{\tilde{\Phi}}_{v}}\matr{{c}zI-\hat{A}\\-\hat{C}}=I,
	\]
	where we used the fact that:
	\[
	(I+\mbf \Delta)^{-1}\mbf \Phi_{w,opt}=\mbf{\tilde{\Phi}}_{w},\quad	(I+\mbf \Delta)^{-1}\mbf \Phi_{v,opt}=\mbf{\tilde{\Phi}}_{v}
	\]
	Under condition~\eqref{EQN_Robustness_Condition}, the perturbation $\mbf \Delta$ has norm bounded by:
	\[\norm{\mbf \Delta}_{\Hinf}\le (\epsilon_A+\epsilon_C\norm{K}_2)\norm{\RR_{A-KC}}_{\Hinf}\le 1/2\]
	Hence:
	\[
	\norm{(I+\mbf \Delta)^{-1}}_{\Hinf}\le \sum_{t=0}^{\infty}\snorm{\mbf \Delta}^t_{\Hinf}\le\frac{1}{1-\norm{\mbf \Delta}_{\Hinf}}=2
	\]
	which shows that the responses $\mbf{\tilde{\Phi}}_{w},\mbf{\tilde{\Phi}}_{v}$ are stable. By construction, they are also strictly proper.
	What remains to show is that the robustness constraint holds. We have:
	\begin{align*}
	\norm{\matr{{cc}\mbf{\tilde{\Phi}}_{w}& \mbf{\tilde{\Phi}}_{v}}}_{\Ht}&\le \norm{(I+\mbf \Delta)^{-1}\matr{{cc}\mbf \Phi_{w}& \mbf \Phi_{v}}}_{\Ht}\\
	&\le \norm{(I+\mbf \Delta)^{-1}}_{\Hinf}(1+\norm{K}_2)\norm{\RR_{A-KC}}_{\Ht}\\
	&\le 2 (1+\norm{K}_2)\norm{\RR_{A-KC}}_{\Ht}\le \C
	\end{align*}
	
	\noindent	\textbf{Step c:}	Since $K$ is a feasible gain, by suboptimality
	\begin{align*}
	\text{\normalfont opt}(\C)&\le \norm{\mbf{\tilde{\Phi}}_{w}\hat{K}+\mbf{\tilde{\Phi}}_{v}}_{\Ht}\le \norm{(I+\mbf\Delta)^{-1}}_{\Hinf}\norm{\mbf\Phi_{w}\hat{K}+\mbf\Phi_{v}}_{\Ht}\\
	&\le 2\norm{\mbf\Phi_{w}\hat{K}+\mbf\Phi_{v}}_{\Ht}=2\norm{\mbf\Phi_{w}(\hat{K}-K)}_{\Ht}\\
	&\le 2\epsilon\norm{\mbf\RR_{A-KC}}_{\Ht}
	\end{align*}
	where we used $\mbf\Phi_{v}=-\mbf\Phi_{w}K$.
\hfill $\blacksquare$

\section{Identification algorithm and analysis}
Here we briefly present the results from~\cite{tsiamis2019finite}.
The stochastic identification algorithm involves two steps.
First, we regress future outputs to past outputs to obtain a Hankel-like matrix, which is a product of an observability and a controllability matrix. Second, we perform a realization step, similar to the Ho-Kalman algorithm, to obtain estimates for $A,C,K$. The outline can be found in Algorithm~\ref{ALG_identification}

\begin{algorithm}[!t]
	\caption{Stochastic Identification Algorithm}
	\label{ALG_identification}
	\begin{algorithmic}[1] {}
		\Require $p,f$, $y_0,\dots,y_{N+p+f-1}$, $W$.
		\Ensure Estimates: $\hat{C}$, $\hat{A}$, $\hat{K}$. 
		\State Compute $\hat{G}=\sum_{k=p}^{N+p-1}Y^{+}_kY^{-*}_k\paren{\sum_{k=p}^{N+p-1}Y^{-}_kY^{-*}_k}^{-1}$.
		\State Compute SVD: $\hat{G}W=\matr{{cc}\hat{U}_1&\hat{U}_2}\matr{{cc}\hat{\Sigma}_1&0\\0&\hat{\Sigma}_2}\matr{{c}\hat{V}^*_1\\\hat{V}^*_2}$, $\hat{\Sigma}_1\in\R^{n\times n}$.
		\State Set $\hat{\O}_f=\hat{U}_1\hat{\Sigma}^{1/2}_1$, $\hat{\K}_f=\hat{\Sigma}^{1/2}_1\hat{V}^*_1W^{-1}$.
		\State Set $\hat{C}=\hat{\O}_{f}(1:m,:)$,  $\hat{K}=\hat{\K}_{p}(:,m(p-1)+1:mp)$.
		\State Set $\hat{A}=\hat{\O}_{f}(:,1:m(f-1))^{\dagger}\hat{\O}_{f}(:,m+1:mf)$
	\end{algorithmic}
\end{algorithm}

\textbf{Definitions.} Let $p,f$, with $p,f\ge n$ be two design parameters that define the horizons of the past and the future respectively. Assume that we are given $N+p+f-1$ output samples. We define the future outputs $Y^{+}_{k}\in \R^{mf}$ and past outputs $Y^{-}_k\in\R^{mp}$ at time $k\ge p$ as follows:
\begin{align*}
Y^{+}_{k}&\triangleq \matr{{c}y_{k}\\\vdots\\y_{k+f-1}},\quad
Y^{-}_{k}\triangleq\matr{{c}y_{k-p}\\\vdots\\y_{k-1}},\,k\ge p
\end{align*}
The past and future noises $E^{+}_k,E^{-}_k$
are defined similarly:
\begin{align*}
E^{+}_{k}&\triangleq \matr{{c}e_{k}\\\vdots\\e_{k+f-1}},\quad
E^{-}_{k}\triangleq\matr{{c}e_{k-p}\\\vdots\\e_{k-1}},\,k\ge p
\end{align*}
The (extended) observability matrix $\O_k\in \R^{mk\times n}$ and the reversed (extended) controllability matrix $\K_k\in \R^{n\times mk}$ are defined as:
\begin{equation}\label{EQN_Observability_Matrix}
\O_k\triangleq\matr{{cccc}C^*&A^*C^*&\cdots&(A^*)^{k-1}C^*}^*,
\end{equation}
\begin{equation}
\label{EQN_Kalman_Controllability}
\K_k\triangleq\matr{{cccc}(A-KC)^{k-1}K&\dots&(A-KC)K&K}
\end{equation}
respectively.
We define the Hankel matrix:
\begin{equation}\label{EQN_Hankel}
G\triangleq\O_f \K_p.
\end{equation}
Finally, for any $s\ge 2$, define block-Toeplitz matrix:
\begin{equation}
\label{POE_EQN_Innovation_Toeplitz}
\T_s\triangleq\matr{{cccc}I_m&0& &0\\CK&I_m&\cdots&0\\ \vdots&\vdots& &\vdots \\CA^{s-2}K&CA^{s-3}K&\cdots&I_m}. 
\end{equation}
Finally, define the covariance matrices of the (weighted) past and future noises:
\begin{align*}
\Sigma_{E,f}&\triangleq\mathbb{E}\paren{\T_f E^{+}_{k}E^{+*}_{k}\T^*_f }\\
\Sigma_{E,p}&\triangleq\mathbb{E}\paren{\T_p E^{-}_{k}E^{-*}_{k}\T^*_p}.
\end{align*}

\subsection{Regression step}
It can be shown that the future and past outputs satisfy the following relation:
\begin{equation}\label{EQN_basic_regression}
Y^{+}_k=GY^{-}_k + \O_f(A-KC)^px_{k-p}+ \T_f E^{+}_k.
\end{equation}
Based on~\eqref{EQN_basic_regression}, we compute the least squares estimate \begin{equation}\label{EQN_G}
\hat{G}=\sum_{k=p}^{N+p-1}Y^{+}_kY^{-*}_k\paren{\sum_{k=p}^{N+p-1}Y^{-}_kY^{-*}_k}^{-1}.
\end{equation} 
The next theorem analyzes the sample complexity of the regression step~\citep{tsiamis2019finite}.
\begin{theorem}[Regression Step Analysis]\label{MAIN_THM_Hankel_Matrix}
	Consider system~\eqref{EQN_System}.
	Let $\hat{G}$, with $p\ge f$, be the estimate~\eqref{EQN_G} of the subspace identification algorithm given an output trajectory $y_0,\dots,y_{N+p+f-1}$ and let $G$ be as in~\eqref{EQN_Hankel}. Fix a confidence $\delta>0$ and define:
	\begin{equation}\label{MAIN_EQN_DELTA_N}
	\delta_N\triangleq\paren{2(N+p-1)m}^{-\log^2\paren{2pm}\log\paren{2(N+p-1)m}}.
	\end{equation}
	If $N\ge  \mathrm{poly}(\log 1/\delta,p,\T_p)$, then with probability at least $1-\delta_N-6\delta$:
		\begin{align} \label{MAIN_EQN_Basic_Theorem_Statement}
		\norm{G-\hat{G}}_2\le & \underbrace{8\mathcal{C}_1\sqrt{\frac{fmp}{N}\log\frac{5f\kappa}{\delta}}}_{O\paren{\sqrt{p\log N/N}}} +\C_2\underbrace{\norm{\paren{A-KC}^p}_2}_{O\paren{\rho(A-KC)^p}},
		\end{align}
	where 
	\begin{equation}\label{MAIN_EQN_Condition_Number_Approx}
	\kappa=\frac{4}{\sigma_{\min}\paren{\Sigma_{E,p}}}\paren{\norm{\O_p}^2_2\Tr\Gamma+\Tr\Sigma_{E,p}}+\delta
	\end{equation}	
	over-approximates the condition number of $\mathbb{E}\paren{Y_-Y_-^*}$, $\Gamma$ is the steady-state covariance matrix $\Gamma\triangleq \lim_{k\rightarrow \infty}\mathbb{E}x_kx^*_k$, and
	\begin{equation}\label{MAIN_EQN_System_Specific_Constant}
	\C_1= \sqrt{\frac{\norm{\Sigma_{E,f}}_2}{\sigma_{\min}\paren{\Sigma_{E,p}}}},\quad \C_2=4\norm{\O_f}_2\snorm{\O^{\dagger}_p}_2
	\end{equation}
are system-dependent constants, where $\dagger$ denotes the pseudo-inverse.
	\hfill $\diamond$
\end{theorem}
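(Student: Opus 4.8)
The plan is to treat $\hat G$ in~\eqref{EQN_G} as an ordinary least-squares estimator and to separate the two error sources already visible in the regression identity~\eqref{EQN_basic_regression}: a deterministic \emph{bias} from the truncated tail $\O_f(A-KC)^px_{k-p}$, and a stochastic \emph{variance} term from the future innovations $\T_f E^+_k$. Writing $\hat\Lambda\triangleq\sum_{k=p}^{N+p-1}Y^-_kY^{-*}_k$ for the empirical covariance of the past (all sums below run over $k=p,\dots,N+p-1$) and substituting~\eqref{EQN_basic_regression} into~\eqref{EQN_G} gives
\[
\hat G-G=\O_f(A-KC)^p\paren{\sum_k x_{k-p}Y^{-*}_k}\hat\Lambda^{-1}+\T_f\paren{\sum_k E^+_kY^{-*}_k}\hat\Lambda^{-1}.
\]
I would bound the two summands separately, so that the first yields the $\C_2\snorm{(A-KC)^p}_2$ term and the second the $O(\sqrt{p\log N/N})$ term.

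For the bias term, the structural fact $Y^-_k=\O_px_{k-p}+\T_pE^-_k$ (the past-output analogue of~\eqref{EQN_basic_regression}) gives $x_{k-p}=\O_p^{\dagger}\paren{Y^-_k-\T_pE^-_k}$ once $\O_p$ has full column rank, which holds under observability with $p\ge n$. Substituting,
\[
\paren{\sum_k x_{k-p}Y^{-*}_k}\hat\Lambda^{-1}=\O_p^{\dagger}-\O_p^{\dagger}\T_p\paren{\sum_k E^-_kY^{-*}_k}\hat\Lambda^{-1},
\]
whose first piece contributes $\snorm{\O_p^{\dagger}}_2$, while the second, a regression of past noise onto past outputs, is bounded by an absolute constant on a high-probability event. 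With $\snorm{\O_f(A-KC)^p}_2\le\snorm{\O_f}_2\snorm{(A-KC)^p}_2$ this produces the factor $\C_2=4\snorm{\O_f}_2\snorm{\O_p^{\dagger}}_2$, the constant $4$ absorbing the noise-regression contribution.

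The variance term is the crux. The key is that $E^+_k$ collects innovations at times $\ge k$, whereas $Y^-_k$ is measurable with respect to $\mathcal{F}_{k-1}=\sigma(e_0,\dots,e_{k-1})$, so future noise is conditionally mean-zero given the regressors. The summands of $\sum_k E^+_kY^{-*}_k$ are not independent, because the $f$-long future blocks overlap across $k$; I would therefore split $\T_fE^+_k$ block-row by block-row and treat, for each offset $i\in\set{0,\dots,f-1}$, the sequence $e_{k+i}Y^{-*}_k$ as a martingale-difference transform with respect to the shifted filtration $\mathcal{F}_{k+i-1}\supseteq\sigma(Y^-_k)$. A self-normalized martingale tail bound (of Abbasi-Yadkori type) applied per offset and combined over the $f$ offsets yields, on a high-probability event, a bound on $\snorm{\paren{\sum_k E^+_kY^{-*}_k}\hat\Lambda^{-1/2}}_2$ proportional to $\sqrt{\snorm{\Sigma_{E,f}}_2\,fmp\,\log(f\kappa/\delta)}$, where $mp$ is the dimension of $Y^-_k$ and the logarithm comes from the self-normalization. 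To convert this into the stated rate I then need $\lambda_{\min}(\hat\Lambda)$ bounded below by a constant multiple of $N\sigma_{\min}(\E[Y^-_kY^{-*}_k])$, so that $\snorm{\hat\Lambda^{-1/2}}_2$ is at most a constant times $(N\sigma_{\min}(\Sigma_{E,p}))^{-1/2}$; this is where $\kappa$ enters, since $\E[Y^-_kY^{-*}_k]=\O_p\Gamma\O_p^*+\Sigma_{E,p}$ has smallest eigenvalue at least $\sigma_{\min}(\Sigma_{E,p})$ and trace at most $\snorm{\O_p}_2^2\Tr\Gamma+\Tr\Sigma_{E,p}$, matching~\eqref{MAIN_EQN_Condition_Number_Approx}. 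Collecting the factors reproduces $\C_1=\sqrt{\snorm{\Sigma_{E,f}}_2/\sigma_{\min}(\Sigma_{E,p})}$ and the rate $\C_1\sqrt{(fmp/N)\log(5f\kappa/\delta)}$.

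The main obstacle I anticipate is exactly the coupling inside the variance term: the self-normalized concentration uses $\hat\Lambda$ in its normalizer, while the lower bound on $\lambda_{\min}(\hat\Lambda)$ must be proved separately and on the same event, all for a dependent (mixing) time series rather than i.i.d. data. Establishing the Gram lower bound forces the sample requirement $N\ge\mathrm{poly}(\log 1/\delta,p,\T_p)$ and, through a union/peeling argument over the $\log N$ possible scales of $\hat\Lambda$ in the self-normalized bound, contributes the extra failure probability $\delta_N$ of~\eqref{MAIN_EQN_DELTA_N}. Once both the martingale bound and the Gram lower bound hold on an event of probability at least $1-\delta_N-6\delta$, adding the bias and variance estimates and inserting the definitions of $\C_1,\C_2,\kappa$ gives~\eqref{MAIN_EQN_Basic_Theorem_Statement}.
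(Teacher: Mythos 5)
Your proposal takes essentially the same route as the paper's own proof, which is in fact only a proof by reference (``We follow exactly the proof from \cite{tsiamis2019finite}\dots''): you reconstruct that argument's architecture faithfully --- the bias/variance splitting of the least-squares error via \eqref{EQN_basic_regression}, the decomposition of the overlapping future innovations into per-offset martingale-difference transforms handled by self-normalized (Abbasi-Yadkori-type) bounds with a union bound over the $f$ offsets, the persistence-of-excitation lower bound on $\hat\Lambda$ that brings in $\kappa$ and the burn-in $N\ge \mathrm{poly}(\log 1/\delta,p,\T_p)$, and the treatment of the truncation term through $x_{k-p}=\O_p^{\dagger}\paren{Y^-_k-\T_pE^-_k}$ --- and you even land directly on the constant $\sqrt{\snorm{\Sigma_{E,f}}_2}$, which is precisely the one refinement over \cite{tsiamis2019finite} that the paper's proof remark claims for itself. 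Two bookkeeping points differ from the actual argument, though neither changes the plan: the failure probability $\delta_N$ is not generated by a peeling/union argument over the $\log N$ scales of $\hat\Lambda$ inside the self-normalized bound, but by the concentration result for sample covariances of dependent Gaussian time series that simultaneously yields the lower bound on $\lambda_{\min}(\hat\Lambda)$ and the upper bound entering the logarithm; and your claim that the past-noise-on-past-outputs regression piece of the bias term is ``bounded by an absolute constant'' is not an unconditional fact --- it is itself established only on those same covariance-concentration events (via Cauchy--Schwarz against $\lambda_{\min}(\hat\Lambda)$), so it consumes the same machinery rather than being absorbed for free into the factor $4$ of $\C_2$.
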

\begin{proof}
	We follow exactly the proof from~\cite{tsiamis2019finite}, but we improve a constant by avoiding applying the sub-multiplicative property of norm. The term $\norm{\Sigma_{E,f}}_2$ appears instead of its upper bound $\sqrt{\norm{R}_2}\norm{\T_f}_2$. 
\end{proof}
The above bounds depend polynomially on $\T_p,\T_f$. If $A$ has simple eigenvalues (or  Jordan blocks of small size), then the bounds depend polynomially on $n,m,f$ as well. However, if $A$ has a large Jordan block, e.g. of size $O(n)$, it is possible that the bounds scale exponentially with $n$; this follows from the fact that  $\snorm{\T_k}_2$ scales with $\max_{0\le i\le k-1}\snorm{A^i}_2$ in the worst case. 

To recover consistency, we need to make term $\C_2\norm{(A-KC)^p}_2$ go to zero at least as fast as $1/\sqrt{N}$. Since $A-KC$ is stable, it is sufficient to select $p=\beta \log N$, for some large enough $\beta$. The following corollary follows directly from Theorem~\ref{MAIN_THM_Hankel_Matrix}.

\begin{corollary}[Consistency]\label{COR_consistency}
	Consider the conditions of Theorem~\ref*{MAIN_THM_Hankel_Matrix} and the definition of $\delta_N$ in~\eqref{MAIN_EQN_DELTA_N}.
 Fix a confidence $\delta>0$ and let $\rho>\rho(A-KC)$. Select 
 \begin{equation}\label{EQN_beta}
 p=\beta\log N,\,\beta>-1/2\frac{1}{\log \rho}
 \end{equation}
If $N\ge  \mathrm{poly}(\log 1/\delta,\beta)$, then with probability at least $1-\delta_N-6\delta$:
	\begin{align*}
	\norm{G-\hat{G}}_2\le  \sqrt{\frac{\norm{\Sigma_{E,f}}_2}{\sigma_{\min}\paren{\Sigma_{E,p}}}}\sqrt{fmp}\tilde{O}\paren{\sqrt{\frac{\log 1/\delta}{N}}},
	\end{align*}
where 
$\tilde{O}$ hides logarithmic terms of $N$, constants, and other system parameters.
\hfill $\diamond$
\end{corollary}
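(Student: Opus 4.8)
The plan is to obtain Corollary~\ref{COR_consistency} as an immediate specialization of the bound~\eqref{MAIN_EQN_Basic_Theorem_Statement} in Theorem~\ref{MAIN_THM_Hankel_Matrix} to the choice $p=\beta\log N$ from~\eqref{EQN_beta}. That bound splits $\norm{G-\hat G}_2$ into a \emph{variance} term of order $\sqrt{p\log N/N}$ and a \emph{bias} term $\C_2\norm{(A-KC)^p}_2$ of order $\rho(A-KC)^p$. The entire content of the corollary is to check that the growing horizon $p=\beta\log N$ drives the bias term below the $1/\sqrt N$ rate while the variance term keeps the claimed prefactor after logarithmic factors are swept into $\tilde O$.

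For the bias term, I would invoke the transient constant $\tau(A-KC,\rho)$ introduced in Theorem~\ref{THM_CE_performance}, which is finite for every $\rho>\rho(A-KC)$ by Gelfand's formula, so that $\norm{(A-KC)^p}_2\le\tau(A-KC,\rho)\,\rho^p$. Writing $\rho^{\beta\log N}=N^{\beta\log\rho}$ and using that $\rho<1$ makes $\log\rho<0$, the hypothesis $\beta>-\tfrac{1}{2\log\rho}$ yields $\beta\log\rho<-\tfrac12$, hence $\norm{(A-KC)^p}_2<\tau(A-KC,\rho)\,N^{-1/2}$ for all $N\ge1$. Thus $\C_2\norm{(A-KC)^p}_2=O(1/\sqrt N)$, with the system constant $\C_2\,\tau(A-KC,\rho)$ absorbed into $\tilde O$.

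For the variance term I would factor $8\C_1\sqrt{\tfrac{fmp}{N}\log\tfrac{5f\kappa}{\delta}}=8\C_1\sqrt{fmp}\cdot\sqrt{\tfrac1N\log\tfrac{5f\kappa}{\delta}}$ and split $\log\tfrac{5f\kappa}{\delta}=\log\tfrac1\delta+\log(5f\kappa)$. Since $\kappa$ in~\eqref{MAIN_EQN_Condition_Number_Approx} grows at most polynomially in $p=\beta\log N$ through $\O_p$ and $\Sigma_{E,p}$, the term $\log(5f\kappa)$ is $O(\log\log N)$ and, together with the numerical constant $8$, is swept into $\tilde O$, leaving $\sqrt{\log(1/\delta)/N}$. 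Recalling $\C_1=\sqrt{\norm{\Sigma_{E,f}}_2/\sigma_{\min}\paren{\Sigma_{E,p}}}$, the variance term reduces to $\sqrt{\norm{\Sigma_{E,f}}_2/\sigma_{\min}\paren{\Sigma_{E,p}}}\,\sqrt{fmp}\,\tilde O\paren{\sqrt{\log(1/\delta)/N}}$, which is the asserted form; the bias contribution shares this $1/\sqrt N$ rate and is dominated by the same prefactor.

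The one genuinely delicate point — and the main obstacle — is the self-consistency of the sample-size precondition. Theorem~\ref{MAIN_THM_Hankel_Matrix} requires $N\ge\mathrm{poly}(\log1/\delta,p,\T_p)$, yet with $p=\beta\log N$ the horizon now depends on $N$ itself. I would resolve this by noting that under benign Jordan structure $\snorm{\T_p}_2$ grows polynomially in $p$, so the requirement becomes $N\ge\mathrm{poly}(\log1/\delta,\beta,\log N)$; because $\log N$ grows sublinearly, this is implied by $N\ge\mathrm{poly}(\log1/\delta,\beta)$ once $N$ is large, matching the corollary's hypothesis. Finally, the failure probability $1-\delta_N-6\delta$ and the expression for $\delta_N$ in~\eqref{MAIN_EQN_DELTA_N} are inherited unchanged from Theorem~\ref{MAIN_THM_Hankel_Matrix}.
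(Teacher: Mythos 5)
Your proposal is correct and takes essentially the same route as the paper, whose proof simply specializes the bound~\eqref{MAIN_EQN_Basic_Theorem_Statement} of Theorem~\ref{MAIN_THM_Hankel_Matrix} to $p=\beta\log N$ and observes that the condition $\beta>-\tfrac{1}{2\log\rho}$ forces $\norm{(A-KC)^p}_2=O(1/\sqrt{N})$, so the bias term decays at the same $1/\sqrt{N}$ rate while all remaining factors are absorbed into $\tilde O$. Your added details (bounding the bias via $\tau(A-KC,\rho)$ and Gelfand's formula, sweeping $\log(5f\kappa)$ into $\tilde O$, and checking self-consistency of the sample-size precondition when $p$ depends on $N$) just make explicit what the paper leaves implicit.
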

Condition $\beta>-1/2\frac{1}{\log \rho}$ guarantees that $\norm{(A-KC)^p}_2=O(1/\sqrt{N})$.
\subsection{Realization step}\label{Section_SVD}
First, we compute a rank-$n$ factorization of the full rank matrix $\hat{G}W$, where $W$ is a user choice.
Let the SVD of $\hat{G}W$ be:
\begin{equation}\label{EQN_SVD}
\hat{G}W=\matr{{cc}\hat{U}_1&\hat{U}_2}\matr{{cc}\hat{\Sigma}_1&0\\0&\hat{\Sigma}_2}\matr{{c}\hat{V}_1^*\\\hat{V}_2^*},
\end{equation}
where $\hat{\Sigma}_1\in\R^{n\times n}$ contains the $n-$largest singular values.  
Then, a standard realization of $\O_f$, $\K_p$ is:
\begin{equation}\label{EQN_realization_observability_controllability}
\hat{\O}_{f}=\hat{U}_1\hat{\Sigma}^{1/2}_1,\: \hat{\K}_{p}=\hat{\Sigma}^{1/2}_1\hat{V}^*_1W^{-1}_2.
\end{equation}
For the theoretical finite sample bounds we used $W=I$, but in simulations the choice $W=\paren{\sum_{k=p}^{N+p-1}Y^{-}_kY^{-*}_k}^{1/2}$ works better--see for example MOESP~\citep{qin2006overview}.
For this step, we need the following assumption.
\begin{assumption}\label{ASS_SVD}
The order $n$ of the system is known. The pair $(A,K)$ is controllable.
\end{assumption}

Based on the estimated observability/controllability matrices, we can approximate the system parameters as follows:
\begin{equation*}\label{EQN_C_K_estimate}
\hat{C}=\hat{\O}_f\paren{1:m,:},\quad\hat{K}=\hat{\K}_p\paren{:,(p-1)m+1:pm},
\end{equation*} 
where the notation $\hat{\O}_f\paren{1:m,:}$ means we pick the first $m$ rows and all columns. The notation for $\hat{\K}_p$ has similar interpretation.
For simplicity, define 
\[\hat{\O}^{u}_{f}\triangleq \hat{\O}_{f}\paren{1:m(f-1),:},
\]
which includes the $m(f-1)$ ``upper" rows of matrix $\hat{\O}_{f}$. Similarly, we define the lower part $\hat{\O}^{l}_{f}$.
For matrix $A$ we exploit the structure of the extended observability matrix and solve
$
\hat{\O}^{u}_{f}\hat{A}=\hat{\O}^{l}_p
$
in the least squares sense by computing
\begin{equation*}\label{EQN_A_estimate}
\hat{A}=\paren{\hat{\O}^{u}_{f}}^{\dagger}\hat{\O}^{l}_p,
\end{equation*} 
where $\dagger$ denotes the pseudoinverse.

The next theorem analyzes the robustness of the realization step. Before we state it, let us introduce some notation. Assume that we knew $G$ exactly. Then, the SVD in the realization step would be:
\begin{equation*}
GW=\matr{{cc}U_1&U_2}\matr{{cc}\Sigma_1&0\\0&0}\matr{{c}V_1^*\\V_2^*}=U_1\Sigma_1V^*_1,
\end{equation*}
for some $\Sigma_1\in \R^{n\times n}$.
Hence, if we knew $GW$ exactly, the estimated observability and controllability matrices would be
\begin{equation}
\bar{\O}_f=U_1\Sigma^{1/2}_1,\,\bar{\K}_p=\Sigma^{1/2}_1V^*_1W^{-1}.
\end{equation}
The original matrices $\O_f,\K_p$ and $\bar{\O}_f,\bar{\K}_p$ are equivalent up to the similarity transformation $\O_fS=\bar{\O}_f$, $S^{-1}\K_p=\bar{\K}_p$ where
\begin{equation}\label{EQN_Similarity}
S\triangleq \O^{\dagger}_f\bar{\O}_f.
\end{equation}
\begin{theorem}[Realization robustness]\label{SVD_THM_Main}
	Suppose that Assumption~\ref{ASS_SVD} holds. Consider the true Hankel-like matrix $G$ defined in~\eqref{EQN_Hankel} and the noisy estimate $\hat{G}$ defined in~\eqref{EQN_G}, with $p,f>n$. Let $\hat{A},\hat{C},\hat{K},\hat{\O}_f$, $\hat{\K}_p$ be the output of the balanced realization algorithm based on $\hat{G}$ with $W=I$. Let $S$ be the similarity transformation~\eqref{EQN_Similarity}. If $G$ has rank $n$ and the following robustness condition is satisfied:
	\begin{equation}\label{SVD_EQN_robustness_condition}
	\norm{\hat{G}-G}_2\le \frac{\sigma_{n}\paren{G}}{4},
	\end{equation}
	then there exists an orthonormal matrix $T\in\R^{n\times n}$ such that:
	\begin{align}\label{SVD_EQN_Bound_OBS_CONT}
	&	\norm{\hat{\O}_f-\O_fST}_{2}\le 2\sqrt{\frac{10n}{\sigma_{n}\paren{G}}}\norm{G-\hat{G}}_2\nonumber\\
	&	\norm{\hat{C}-CST}_{2}\le \norm{\hat{\O}_f-\O_fST}_{2}\nonumber\\
	&	\norm{\hat{A}-T^*S^{-1}AST}_{2}\le \underbrace{\frac{\sqrt{\norm{G}_2}+\sigma_o}{\sigma^2_o}}_{O\paren{1}}\norm{\hat{\O}_f-\O_fST}_{2}\nonumber\\
	&	\norm{\hat{K}-T^*S^{-1}K}_{2}\le 2\sqrt{\frac{10n}{\sigma_{n}\paren{G}}}\norm{G-\hat{G}}_2,\nonumber
	\end{align}
	where
	$
	\sigma_o=\min\paren{\sigma_{n}\paren{\hat{\O}^{u}_{f}},\sigma_{n}\paren{\O^{u}_{f}S}}.
	$
	The notation $\hat{\O}^{u}_{f},\O^{u}_{f},$ refers to the upper part of the respective matrix (first $(f-1)m$ rows).
	\hfill $\diamond$
\end{theorem}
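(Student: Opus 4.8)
The plan is to reduce everything to a single robustness bound for the balanced square-root factorization, and then to propagate that bound through the three submatrix/least-squares operations defining $\hat{C},\hat{K},\hat{A}$. First I would establish a clean spectral gap at index $n$. Since $G=\O_f\K_p$ has rank exactly $n$, we have $\sigma_{n+1}(G)=0$, so Weyl's inequality combined with the robustness condition~\eqref{SVD_EQN_robustness_condition} gives $\sigma_{n+1}(\hat{G})\le\snorm{\hat{G}-G}_2\le\sigma_n(G)/4$ and $\sigma_n(\hat{G})\ge\sigma_n(G)-\snorm{\hat{G}-G}_2\ge 3\sigma_n(G)/4>0$. This guarantees that $\hat{\Sigma}_1$ is invertible and cleanly separated from the discarded block, so the rank-$n$ realization~\eqref{EQN_realization_observability_controllability} is well posed. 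I would also record that the best rank-$n$ approximation $\hat{G}_n\triangleq\hat{U}_1\hat{\Sigma}_1\hat{V}_1^*$ obeys $\snorm{\hat{G}_n-G}_2\le 2\snorm{\hat{G}-G}_2$, since $\snorm{\hat{G}_n-\hat{G}}_2=\sigma_{n+1}(\hat{G})\le\snorm{\hat{G}-G}_2$ and $G$ is itself rank $n$.

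The core step, which I expect to be the main obstacle, is a robustness bound for the balanced factors that produces a \emph{single} orthonormal $T$ aligning both $\hat{\O}_f$ with $\bar{\O}_f=\O_f S$ and $\hat{\K}_p$ with $\bar{\K}_p=S^{-1}\K_p$ simultaneously. Here I would invoke a perturbation lemma for two rank-$n$ matrices sharing a gap at index $n$: their balanced factors $U_1\Sigma_1^{1/2}$ and $\hat{U}_1\hat{\Sigma}_1^{1/2}$ agree up to an orthonormal transformation, with error controlled by $\snorm{\hat{G}_n-G}_2/\sqrt{\sigma_n(G)}$. The matrix $T$ resolves the rotational ambiguity inherent to balanced realizations and can be built from the polar factor of $U_1^*\hat{U}_1$, with Wedin's theorem bounding how far the top singular subspaces of $\hat{G}$ move. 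The factor $\sqrt{n}$ in the claimed bound $2\sqrt{10n/\sigma_n(G)}\,\snorm{G-\hat{G}}_2$ enters when converting the Frobenius-norm subspace estimate to the spectral norm of the factor difference for rank-$n$ objects, while $1/\sqrt{\sigma_n(G)}$ comes from controlling the perturbation of $\Sigma_1^{1/2}$ through the lower bound from the first step; using $\snorm{\hat{G}_n-G}_2\le2\snorm{\hat{G}-G}_2$ then yields the stated dependence on $\snorm{G-\hat{G}}_2$. The delicate point is committing to one $T$ and verifying it works for the observability and controllability factors at once while tracking the constant; by symmetry this gives the bound for $\snorm{\hat{\O}_f-\O_f ST}_2$ and the identical bound for $\snorm{\hat{\K}_p-T^*S^{-1}\K_p}_2$.

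Finally I would propagate these bounds. Since $\hat{C}$ is the first $m$ rows of $\hat{\O}_f$ and $CST$ is the first $m$ rows of $\O_f ST=\bar{\O}_f T$, extracting a submatrix cannot increase the spectral norm, giving $\snorm{\hat{C}-CST}_2\le\snorm{\hat{\O}_f-\O_f ST}_2$; the bound for $\hat{K}$ follows identically from the last $m$ columns of $\hat{\K}_p$ versus $T^*S^{-1}\K_p$. For $\hat{A}=(\hat{\O}^u_f)^\dagger\hat{\O}^l_f$, I would use the exact shift relation $(\O^u_f S)(S^{-1}AS)=\O^l_f S$, so that $T^*(S^{-1}AS)T=(\O^u_f ST)^\dagger(\O^l_f ST)$ (using $T^*(\O^u_f S)^\dagger=(\O^u_f ST)^\dagger$ for orthonormal $T$), and then bound $\hat{A}-T^*S^{-1}AST$ by splitting $\hat{P}^\dagger\hat{Q}-P^\dagger Q=\hat{P}^\dagger(\hat{Q}-Q)+(\hat{P}^\dagger-P^\dagger)Q$ with $P=\O^u_f ST$, $Q=\O^l_f ST$. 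A standard full-column-rank pseudoinverse perturbation bound, valid because $\sigma_n(\hat{\O}^u_f)$ and $\sigma_n(\O^u_f S)$ are both at least $\sigma_o$, converts $\snorm{\hat{\O}_f-\O_f ST}_2$ into the stated estimate: the $\sigma_o^{-1}$ term arises from $\snorm{\hat{P}^\dagger}_2$, the $\sigma_o^{-2}$ term from $\snorm{\hat{P}^\dagger-P^\dagger}_2$, and the factor $\sqrt{\snorm{G}_2}$ from $\snorm{Q}_2\le\snorm{\bar{\O}_f}_2=\sqrt{\snorm{G}_2}$.
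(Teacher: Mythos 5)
Your proposal is correct and follows essentially the same route as the actual proof, which this paper does not spell out but inherits from \cite{tsiamis2019finite}: Weyl's inequality for the spectral gap, the bound $\snorm{\hat{G}_n-G}_2\le 2\snorm{\hat{G}-G}_2$ for the best rank-$n$ approximation, a balanced-factorization perturbation lemma (Lemma 5.14 of Tu et al., standard in the Ho--Kalman robustness literature) that yields a \emph{single} orthonormal $T$ for both factors simultaneously---with the rank-based Frobenius/spectral conversion supplying the $\sqrt{n}$ and the constant $2\sqrt{10n/\sigma_n(G)}$---followed by submatrix extraction for $\hat{C}$, $\hat{K}$ and a least-squares perturbation argument for $\hat{A}$.

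The only substantive deviation is the $\hat{A}$ step: your split $\hat{P}^\dagger\hat{Q}-P^\dagger Q=\hat{P}^\dagger(\hat{Q}-Q)+(\hat{P}^\dagger-P^\dagger)Q$ combined with Wedin's pseudoinverse perturbation bound (whose constant is $\sqrt{2}$ in the full-column-rank case) gives $\paren{\sigma_o+\sqrt{2}\sqrt{\norm{G}_2}}/\sigma_o^2$, a factor $\sqrt{2}$ worse on one term than the stated $\paren{\sqrt{\norm{G}_2}+\sigma_o}/\sigma_o^2$. To recover the exact constant, use instead $\hat{A}-T^*S^{-1}AST=\hat{P}^\dagger(\hat{Q}-Q)+\hat{P}^\dagger(P-\hat{P})P^\dagger Q$, which exploits $\hat{P}^\dagger\hat{P}=I$ and the shift relation $P^\dagger Q=T^*S^{-1}AST$, so that only $\snorm{\hat{P}^\dagger}_2\le \sigma_o^{-1}$ and $\snorm{P^\dagger Q}_2\le \sqrt{\norm{G}_2}/\sigma_o$ enter. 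Everything else in your argument---the gap argument, the use of $\hat{G}_n$, the single-$T$ alignment of both factors, and the submatrix bounds for $\hat{C}$ and $\hat{K}$---is sound and matches the original proof.
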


\section{Formal end-to-end result}\label{Section_formal}
If we combine Corollary~\ref{COR_consistency} and Theorem~\ref{SVD_THM_Main}, we obtain finite sample guarantees for the estimation of matrices $A,C,K$.  
Meanwhile, Theorems~\ref{THM_CE_performance} and~\ref{THM_SLS_Guarantees} provide two different solutions to Problem~\ref{Prob_Synthesis}. 
Putting everything together gives the following formal end-to-end bound.

\begin{theorem}[End-to-end guarantees]\label{THM_EndToEnd_Formal}
	Consider the conditions of Theorem~\ref{MAIN_THM_Hankel_Matrix} and suppose Assumption~\ref{ASS_SVD} holds. Let $p=\beta \log N$, $p\ge f>n$, with $\beta$ as in~\eqref{EQN_beta}. Consider the definition of $S$ in~\eqref{EQN_Similarity} and $\delta_N$ in~\eqref{MAIN_EQN_DELTA_N}. Fix a failure probability $\delta \in (0,1)$. 
	Then, if \[N\ge \mathrm{poly}(\log(1/\delta),\beta,\sigma_n(G)),\]  with probability at least $1-6\delta-\delta_N$ the identification and filter synthesis pipeline of Fig.~\ref{Figure_CoarseID_architecture}, with system identification performed as in Algorithm~\ref{ALG_identification} with $W=I$ and filter synthesis performed as in Sections~\ref{Section_CE},~\ref{Section_Dynamic}, achieves mean squared prediction error satisfying
	\begin{align}\label{EQN_Formal_Result}
\sqrt{\lim_{t\rightarrow \infty}\frac{1}{t}\sum_{k=0}^{t}\snorm{\tilde{x}_k-T^*S^{-1}x_k}_2}\le \C_{ID}\C_{KF}\tilde{O}(\sqrt{\frac{\log 1/\delta}{N}})
	\end{align}
	for some orthonormal matrix $T$.
	Constant $\C_{KF}$ is defined as:
		\begin{align*}
\C_{KF}=\inf_{\rho>\rho(A-KC)}\frac{\tau(A-KC,\rho)}{1-\rho}(1+\snorm{K}_2)\norm{\matr{{c} \RR_AK\\ I}R^{1/2}}_{\Ht}
	\end{align*}
	in the case of CE Kalman filtering and
			\begin{align*}
	\C_{KF}=\norm{\RR_{A-KC}}_{\Ht}(1+\snorm{K}_2)\norm{\matr{{c} \RR_AK\\ I}R^{1/2}}_{\Hinf}
	\end{align*}
	in the case of robust KF.
Constant $\C_{ID}$ captures the difficulty of identifying system~\eqref{EQN_System} and is defined as:
	\begin{equation}\label{EQN_Cid}
\C_{ID}=\sqrt{\frac{\norm{\Sigma_{E,f}}_2}{\sigma_{\min}\paren{\Sigma_{E,p}}}}\frac{1}{\sigma_{n}(\O_{f-1}S)\sqrt{\sigma_n(G)}}\sqrt{fmpn}
\end{equation}
  Here, $\tilde{O}$ hides constants, other system parameters, and logarithmic terms.
\end{theorem}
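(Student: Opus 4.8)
The plan is to chain together the three ingredients already assembled: the finite-sample Hankel-matrix guarantee of Corollary~\ref{COR_consistency}, the deterministic realization perturbation bound of Theorem~\ref{SVD_THM_Main}, and the filter sub-optimality bounds of Theorems~\ref{THM_CE_performance} and~\ref{THM_SLS_Guarantees}. First I would invoke Corollary~\ref{COR_consistency} with the choice $p=\beta\log N$ from~\eqref{EQN_beta}: on a single event of probability at least $1-6\delta-\delta_N$, this yields
\[
\snorm{G-\hat{G}}_2 \le \sqrt{\tfrac{\norm{\Sigma_{E,f}}_2}{\sigma_{\min}\paren{\Sigma_{E,p}}}}\sqrt{fmp}\,\tilde{O}\paren{\sqrt{\tfrac{\log 1/\delta}{N}}}.
\]
Every subsequent step is deterministic and takes place on this event, so no further probabilistic budget is spent.

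Next I would verify that the realization robustness condition~\eqref{SVD_EQN_robustness_condition}, namely $\snorm{G-\hat{G}}_2\le\sigma_n(G)/4$, is in force. Since the right-hand side above decays as $\tilde{O}(1/\sqrt{N})$, this holds once $N\ge\mathrm{poly}(\log(1/\delta),\beta,\sigma_n(G))$, which is precisely the stated sample-complexity requirement and is where the dependence on $\sigma_n(G)$ enters. With~\eqref{SVD_EQN_robustness_condition} satisfied, Theorem~\ref{SVD_THM_Main} supplies an orthonormal $T$ for which the identified triple $(\hat{A},\hat{C},\hat{K})$ approximates the rotated realization $(T^*S^{-1}AST,\,CST,\,T^*S^{-1}K)$ with
\[
\epsilon := \max\set{\epsilon_A,\epsilon_C,\epsilon_K} \le O\paren{\tfrac{\sqrt{n}}{\sigma_n(\O_{f-1}S)\sqrt{\sigma_n(G)}}}\snorm{G-\hat{G}}_2 = \C_{ID}\,\tilde{O}\paren{\sqrt{\tfrac{\log 1/\delta}{N}}},
\]
where $\sigma_n(\O_{f-1}S)$ arises from the $\hat{A}$ perturbation bound (through $\sigma_o$) and the prefactor $\sqrt{n/\sigma_n(G)}$ from the common $\sqrt{10n/\sigma_n(G)}$ term, thereby reproducing $\C_{ID}$ in~\eqref{EQN_Cid}.

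The remaining task is to feed this parameter error into the appropriate filter guarantee. Because $\epsilon=\tilde{O}(1/\sqrt{N})\to 0$, for $N$ large the robustness precondition of Theorem~\ref{THM_CE_performance} (resp. condition~\eqref{EQN_Robustness_Condition} for the robust filter) holds automatically; in the CE case one also fixes a feasible $\rho>\rho(A-KC)$, compatible with the $\beta$ of~\eqref{EQN_beta}, and optimizes over it to recover the infimum in $\C_{KF}$. I would then apply the chosen theorem to the \emph{rotated} system to obtain $\tilde{J}\le\C_{KF}\cdot O(\epsilon)$, and substitute the bound on $\epsilon$ to conclude $\tilde{J}\le\C_{ID}\C_{KF}\tilde{O}(\sqrt{\log(1/\delta)/N})$. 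The main obstacle, and the step demanding the most care, is the bookkeeping of the two similarity transforms: the filter theorems are phrased for the true system, yet the pipeline identifies only the rotated realization, which is exactly why the error in~\eqref{EQN_Formal_Result} must be measured against $T^*S^{-1}x_k$. One must check that $\rho(A-KC)$ entering $\C_{KF}$ is invariant under $ST$ (it is, being spectral), that the basis-dependent $\Ht/\Hinf$ norms of $\matr{{c}\RR_AK\\ I}R^{1/2}$ and $\RR_{A-KC}$ relate to their rotated counterparts with the transformation norm $\snorm{ST}_2=\snorm{S}_2$ absorbed into $\C_{ID}$ via $\sigma_n(\O_{f-1}S)$ and the hidden constants, and finally that the regression guarantee, condition~\eqref{SVD_EQN_robustness_condition}, and the filter robustness condition all coexist on the same high-probability event, which they do since only the first is stochastic and the other two are deterministic consequences of the chosen $N$.
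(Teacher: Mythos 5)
Your proposal is correct and follows exactly the route the paper itself takes (which it only sketches): chain Corollary~\ref{COR_consistency} with $p=\beta\log N$ to bound $\snorm{G-\hat G}_2$ on a single event of probability $1-6\delta-\delta_N$, use the sample-size requirement to enforce the realization robustness condition~\eqref{SVD_EQN_robustness_condition} so that Theorem~\ref{SVD_THM_Main} yields $\epsilon\le\C_{ID}\,\tilde O(\sqrt{\log(1/\delta)/N})$, and then feed $\epsilon$ into Theorem~\ref{THM_CE_performance} or Theorem~\ref{THM_SLS_Guarantees} applied to the rotated realization, which is precisely why the error in~\eqref{EQN_Formal_Result} is measured against $T^*S^{-1}x_k$. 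Your additional care with the deterministic-versus-stochastic bookkeeping and the similarity-transform factors (absorbed into $\tilde O$, with $\snorm{S},\snorm{S^{-1}}$ bounded) matches the paper's own remarks following the theorem.
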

The intuition behind constant $\C_{ID}$ is the following. The noise both excites the system and also introduces errors that obstruct identification; this is captured by the square root of the condition number of the covariances $\Sigma_{E,f},\Sigma_{E,p}$. Moreover, $\sigma_{n}(\O_{f-1}S)$ quantifies how easy it is to observe system~\eqref{EQN_System}. A similar interpretation holds for $\sigma_n(G)$. Finally, larger dimensions $f,p,m,n$ require more samples for identification since there are more unknowns in matrix $G$.

Note that the mean squared prediction error in~\eqref{EQN_Formal_Result} is computed with respect to the estimated state-space basis, i.e. up to the similarity transformation $ST$, where $S$ is defined in~\eqref{EQN_Similarity} and $T$ is some orthonormal matrix.
In terms of the original state-space basis, the mean squared prediction error~\eqref{EQN_Formal_Result} would be:
\[
\sqrt{\lim_{t\rightarrow \infty}\frac{1}{t}\sum_{k=0}^{t}\snorm{ST\tilde{x}_k-x_k}_2}\le \norm{S}_2\C_{ID}\C_{KF}\tilde{O}(\sqrt{\frac{\log 1/\delta}{N}})
\]
From~\eqref{EQN_Similarity}, the norms of $S,S^{-1}$ are bounded, so, the bound~\eqref{EQN_Formal_Result} is not vacuous.

\end{document}